\documentclass[a4paper,aps,pre,superscriptaddress,floatfix,nofootinbib,longbibliography,notitlepage,twocolumn]{revtex4-1}

\usepackage[utf8]{inputenc}
\usepackage[english]{babel}

\usepackage{amsmath,amssymb,amsfonts,amsthm,mathrsfs,amsopn}
\usepackage{mathtools}
\usepackage[percent]{overpic}
\usepackage{graphicx}
\usepackage{xcolor}
\usepackage{bm}
\usepackage{braket}
\usepackage{pifont}
\usepackage{blkarray,bigstrut}
\usepackage{multirow}
\usepackage{array}
\usepackage{booktabs}
\usepackage{dcolumn}
\usepackage{float}
\usepackage{svg}
\usepackage{comment}
\usepackage{verbatim}

\usepackage{hyperref}

\DeclareMathOperator{\Arg}{Arg}

\def\be{\begin{equation}}
\def\ee{\end{equation}}
\def\bc{\begin{center}}
\def\ec{\end{center}}
\def\bea{\begin{eqnarray}}
\def\eea{\end{eqnarray}}

\theoremstyle{plain}
\newtheorem{theorem}{Theorem}
\newtheorem{proposition}[theorem]{Proposition}

\newtheorem{remark}[theorem]{Remark}

\definecolor{brickred}{rgb}{0.7, 0.25, 0.33}
\definecolor{applegreen}{rgb}{0.55, 0.71, 0.0}

\begin{document}
\title{The role of asymmetric time delay and its structure in 1D swarmalators}

\author{Rommel Tchinda Djeudjo}
\email{rommel.tchindadjeudjo@unamur.be}
\affiliation{Department of Mathematics \& naXys, Namur Institute for Complex Systems, University of Namur, Rue Grafé 2, B-5000 Namur, Belgium}

\author{Gourab Kumar Sar}
\affiliation{Complexity Science Group, Department of Physics and Astronomy, University of Calgary, Calgary, AB, Canada}

\author{Timoteo Carletti}
\email{timoteo.carletti@unamur.be}
\affiliation{Department of Mathematics \& naXys, Namur Institute for Complex Systems, University of Namur, Rue Grafé 2, B-5000 Namur, Belgium}

\begin{abstract}
Swarmalators are a class of coupled oscillators that simultaneously synchronize in both space and phase, providing a minimal model for systems ranging from biological microswimmers to robotic swarms. Time delay is ubiquitous in such systems, arising from finite signal propagation speeds and sensory processing lags, yet its structural form, whether symmetric or asymmetric, has received little attention. Here, we study a one-dimensional swarmalator model with asymmetric time delay, in which the delay enters only the self-interaction terms of the spatial and phase dynamics, breaking the symmetry assumed in prior work. We identify various collective states such as async, static phase wave, static $\pi$, and active $\pi$, and derive analytical stability boundaries for each as a function of the coupling parameters and delay. Our analysis reveals that the asymmetric delay structure fundamentally reshapes the collective phase diagram: in particular, for the asymmetric delay models, increasing the delay systematically expands the active $\pi$ state at the expense of other ordered states, in contrast to the symmetric delay model, which more strongly promotes the presence of unsteady states that are generally not well ordered. By providing closed-form stability conditions validated against numerical simulations, our work establishes that the internal structure of the delay, not merely its magnitude, is a decisive factor in determining the emergent collective behavior of swarmalator populations. 
\end{abstract}
\maketitle

\section{Introduction}
Many natural and engineered systems exhibit spatial self-organization, moreover systems possessing some internal degree of freedom, can also synchronize the latter variables with the spatial ones. Japanese tree frogs reposition themselves around a pond while entraining their calling rhythms \cite{aihara2014spatio}. Starfish embryos assemble into spinning \textit{living crystals} by coupling their swimming trajectories to their internal beating cycles \cite{tan2022odd}. Magnetic domain walls, active colloids, and sperm cells align both their direction of motion and their phase of oscillation \cite{hrabec2018velocity, yan2016reconfiguring, riedel2005self}, and modern robotic swarms are increasingly designed to coordinate location and internal state in parallel \cite{barcis2020sandsbots}. The minimal framework that captures such joint space-phase organization is the \textit{swarmalator} model introduced by O'Keeffe et al. \cite{o2017oscillators}, in which Kuramoto-type phase oscillators are bidirectionally coupled to their own spatial dynamics so that synchronization and aggregation feed back on one another.

Among the various formulations~\cite{sar2026interplay}, the one-dimensional (1D) variant~\cite{o2022collective} has emerged as a particularly useful benchmark. The 1D model of swarmalators is governed by the pair of equations
\begin{align}
    \dot{x}_i &= \frac{J}{N} \sum_{j = 1}^{N}
        \sin \bigl(x_j - x_i\bigr)\,
        \cos \bigl(\theta_j - \theta_i\bigr), 
    \\
    \dot{\theta}_i &= \frac{K}{N} \sum_{j = 1}^{N}
        \sin \bigl(\theta_j - \theta_i\bigr)\,
        \cos \bigl(x_j - x_i\bigr)\, .
\end{align}
By placing both the position $x_i$ and the phase $\theta_i$ on the circle $\mathbb{S}^1$, the model retains a rich phenomenology of the two-dimensional model (2D)~\cite{o2017oscillators}, while being amenable to analytical treatment. This balance between tractability and richness has made the 1D swarmalator model a natural starting point for exploring how additional physical ingredients reshape collective behavior. Building on this framework, several extensions have been investigated, by including swarmalators with nonidentical velocities and frequencies~\cite{yoon2022sync}, disordered coupling \cite{o2022swarmalators,hao2023attractive}, random pinning \cite{sar2023pinning,sar2024solvable,sar2023swarmalators}, external forcing \cite{anwar2024forced,anwar2025forced}, phase frustration \cite{lizarraga2023synchronization}, and other related effects and analytical techniques~\cite{sar2025effects,hong2023swarmalators,lr2r-ynzs,anwar2024collective,senthamizhan2026swarmalators, sar2025strategy,o2025global,o2025stability}.

Another ingredient, ubiquitous in both biological and engineered systems, is time delay. Finite signal propagation speeds, sensory integration times, and communication or actuation latencies all prevent agents from responding instantaneously to one another. The consequences of delay are well documented in the pure phase-oscillator setting: in the Kuramoto model, even modest delays destabilize the incoherent state in nontrivial ways, generate multistability between synchronized branches, and can induce oscillation death \cite{yeung1999time,schuster1989mutual,earl2003synchronization}. For swarmalators, by contrast, the effect of delay has only recently begun to be explored. Initial studies have focused on the 2D model, where numerical investigations have revealed that delay can qualitatively alter the phase diagram and give rise to new collective states~\cite{blum2024swarmalators,kumpeerakij2025aging}; however, the full 2D setting is analytically demanding, as the coupled delay-differential equations for position and phase do not generally admit closed-form stability conditions, motivating thus the study of lower-dimensional variants in which the essential effects of delay can be isolated. A first step in this direction has been taken very recently where a symmetrically delayed version of the 1D model -- in which the same delay enters every coupling channel -- and showed that delay substantially reorganizes the phase diagram~\cite{o2026time}.

However, physical interactions are rarely uniformly delayed across all coupling channels. The alignment-generating terms and the modulating factors that gate them typically originate from distinct physical processes and are subject to different latencies: an agent may receive a delayed estimate of its neighbors' phases through finite-speed sensing while evaluating their spatial proximity with negligible lag, or, in a designed system, a latency may be deliberately introduced in one channel but not in another. Whether the structural form of the delay (which terms carry it, and which do not) matters for the emergent dynamics, or only its magnitude does, has not, to the best of our knowledge, been addressed for swarmalators.

In this paper, we take a first step in this direction by analyzing a 1D swarmalator model with asymmetric time delay, in which $\tau$ enters only the self-interaction (sine) terms of the spatial and phase equations, while the cross-modal (cosine) factors remain instantaneous. For completeness, we also examine the complementary asymmetric case, in which the delay enters only the cross-modal (cosine) factors and the self-interaction terms remain instantaneous (Appendix~\ref{appA}). Moreover, we compare both asymmetric variants with the symmetric delay model in which $\tau$ enters every coupling channel (Appendix~\ref{appB}). We identify the collective states supported by the model, derive closed-form stability boundaries for each as functions of the coupling parameters $(J,K)$ and $\tau$, and validate our analytical predictions against direct numerical simulations. A notable finding is that the delay can play relevant role of control parameter and, depending on where it is introduced, it can lead to dynamics that do not exist in the case without delay. For instance, when incorporated into the ``sine" terms, increasing the delay promotes the presence of the $\pi$ state, especially the active one. However, when incorporated into the ``cosine" terms, it promotes the emergence of unsteady or non-stationary states.

\section{Model }

We study the collective dynamics of a population of $N>>1$ interacting swarmalators, 
each characterized by a spatial position $x_i \in  \mathbb{S}^1$ and an internal 
phase $\theta_i \in \mathbb{S}^1$. Their coupled dynamics is governed by the 
following system of delay differential equations:
\begin{align}
    \dot{x}_i &= v_i + \frac{J}{N} \sum_{j = 1}^{N}
        \sin \bigl(x_j(t-\tau) - x_i(t)\bigr)
        \cos \bigl(\theta_j(t) - \theta_i(t)\bigr), 
    \label{eq:x_dot} \\
    \dot{\theta}_i &= \omega_i + \frac{K}{N} \sum_{j = 1}^{N}
        \sin \bigl(\theta_j(t-\tau) - \theta_i(t)\bigr)
        \cos \bigl(x_j(t) - x_i(t)\bigr).
    \label{eq:theta_dot}
\end{align}
Here, $(v_i,\omega_i)$ denote the natural velocity and frequency of the $i$-th swarmalator. 
In this study, however, we set them to zero for all swarmalators. The parameters 
$(J,K)$ represent the coupling constants. The spatial dynamics~\eqref{eq:x_dot} describe phase-dependent aggregation, whereas the phase 
dynamics~\eqref{eq:theta_dot} capture position-dependent synchronization. 
By anticipating on the following, we will show that this model can lead to a variety of collective behaviors.

To characterize the collective behavior of the system, we use a set of 
scalar observables and complex order parameters.

\paragraph*{Mean speed.}
To differentiate the stationary state from the non-stationary states, we measure the mean velocity $V$, defined as follows:
\begin{equation}
    V(t) = \frac{1}{N} \sum_{i=1}^{N} \|\mathbf{v}_i(t)\|, 
    \qquad 
    \|\mathbf{v}_i(t)\| = \sqrt{\dot{x}^2_i(t) + \dot{\theta}^2_i(t)}.
    \label{eq:avevel}
\end{equation}
A value $V \approx 0$ signals a \emph{static}  regime, whereas 
sustained positive values indicate persistent collective motion.

\paragraph*{Spatial--phase coherence.}
To quantify the correlation between spatial positions and internal phases, we 
use the two complex order parameters
\begin{equation}
    S_{\pm} e^{\iota \phi_{\pm}} = \frac{1}{N} \sum_{j=1}^{N} e^{\,\iota(x_j \pm \theta_j)}, \quad \iota =  \sqrt{-1}
    \label{eq:Spm}
\end{equation}
where $S_+$ and $S_-$ measure the degree of spatial-phase alignment. If the positions and the phases are perfectly correlated, i.e., $x_j = \pm \theta_j + c$ then $S_{\pm} = 1$.

\section{Results}

In this section, we present the main dynamical regimes observed in the swarmalator model with asymmetric delay. We first describe the different collective states obtained from numerical simulations and characterize them by using the metrics presented before. We then analyze the stability of the emerging states and derive analytical conditions for their existence and stability. We will eventually conclude by comparing the analytical findings with numerical simulations.

\begin{figure*}
    \centering
    \includegraphics[width=1.0\linewidth]{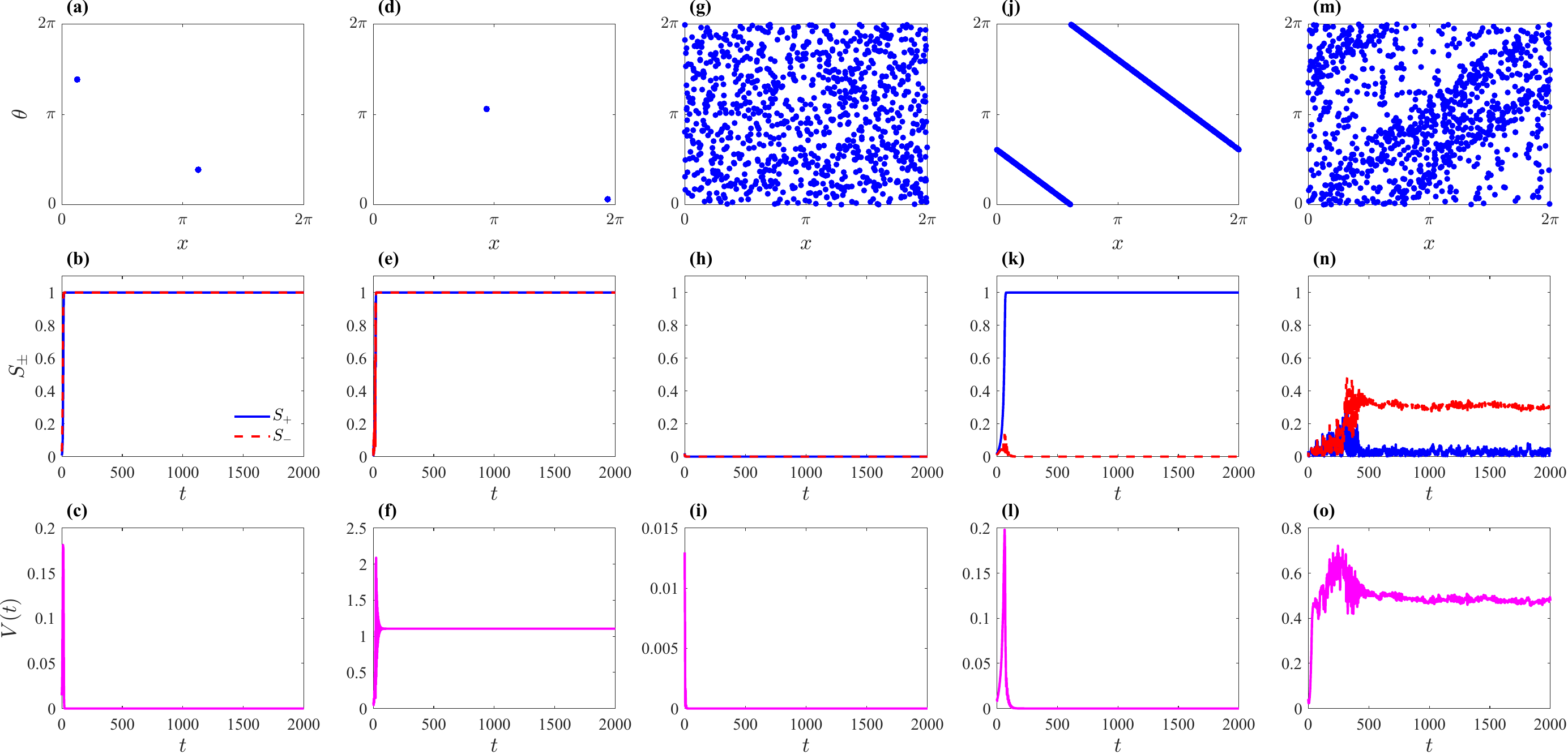}
     \caption{{\bf Representative collective states obtained for $\tau=2.5$.} Each column corresponds to a different collective state: static $\pi$ state, active $\pi$ state, async state, phase wave state, and unsteady state, from left to right. The top row shows the distribution of swarmalators in the $(x,\theta)$ plane, the middle row shows the time evolution of the order parameters $S_+$ and $S_-$, and the bottom row shows the time evolution of the velocity. The parameters are: first column, $J=1$ and $K=1$; second column, $J=-3$ and $K=2.0$; third column, $J=-1.0$ and $K=-1.0$; fourth column, $J=2.0$ and $K=-0.5$; fifth column, $J=-1.3$ and $K=-2.8$. Here, we consider $N=1000$.}
    \label{fig:different_patterns}
\end{figure*}

\subsection{Collective states}
By numerically integrating the governing equations \eqref{eq:x_dot} and \eqref{eq:theta_dot}, we find that the system typically converges to five collective states. Figure~\ref{fig:different_patterns} illustrates these states. The top row shows scatter plots in the $(x,\theta)$ plane where each dot represents one swarmalator, the middle row shows the corresponding time evolution of the order parameters $S_+$ and $S_-$, and the bottom row shows the time evolution of the velocity $V(t)$. We can thus summarize the obtained states as follows.
\begin{itemize}
    \item \textit{Static $\pi$ state}: the swarmalators split into two clusters separated by a phase difference of $\pi$ and a spatial distance of $\pi$. In the $(x,\theta)$ plane, this state appears as two well-defined points. The associated order parameters satisfy $(S_+,S_-)=(1,1)$ and the velocity vanishes (see Fig.~\ref{fig:different_patterns}, first column).

    \item \textit{Active $\pi$ state}: this state has the same two-cluster structure as the static $\pi$ state, with a phase and position separation equal to $\pi$. The main difference is that the velocity is nonzero (see Fig.~\ref{fig:different_patterns}, second column). This state appears because of the presence of the time delay; in the absence of delay, it is not observed in the model studied in Ref.~\cite{o2022collective}

    \item \textit{Async state}: the swarmalators are uniformly distributed in both phase and position. As a result, there is no collective order and the order parameters satisfy $(S_+,S_-)=(0,0)$ (see Fig.~\ref{fig:different_patterns}, third column).

    \item \textit{Phase wave state}: the phases and positions are strongly correlated. Consequently, one of the two order parameters is equal to one, while the other one is equal to zero, namely $(S_+,S_-)=(1,0)$ or $(S_+,S_-)=(0,1)$. In general, which of these two configurations is obtained depends on the initial conditions (Fig.~\ref{fig:different_patterns}, fourth column).

    \item \textit{Unsteady state}: this state is characterized by persistent oscillations of the order parameters, which remain between zero and one. The velocity can also be nonzero. These unsteady states arise due to the presence of the time delay and are shown in Fig.~\ref{fig:different_patterns}, fifth column.
\end{itemize}

Figure~\ref{fig:J_K} shows the bifurcation diagrams in the $(J,K)$ plane for the asymmetric delay model. The numerical results and analytical results, indicate that increasing the delay reduces the stability regions of the phase wave, deforms the stability region of the async state  and favors the emergence of the active $\pi$ state. This behavior differs from the symmetric delay case studied in Ref.~\cite{o2026time}. For comparison, Fig.~\ref{fig:symetric_J_K_plot} shows the corresponding bifurcation diagrams for the symmetric delay model. In that case, the delay also promotes the appearance of active $\pi$ state, but it mainly favors the emergence of unsteady states, in contrast to the asymmetric delay case studied here.

\begin{figure*}[ht!]
    \centering
    \begin{tabular}{ccc}
        \begin{overpic}[width=0.34\textwidth]{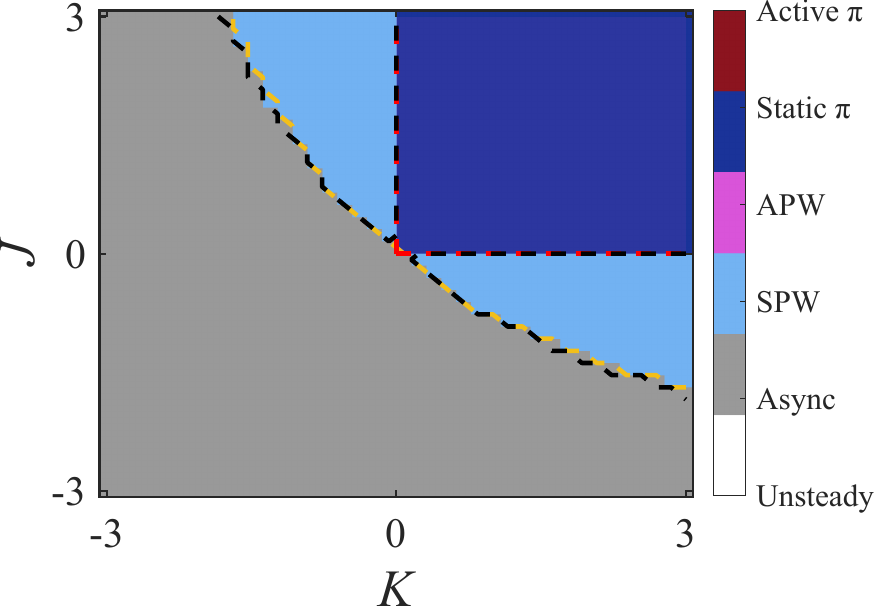}
            \put(4,75){\bfseries (a)}
        \end{overpic}
        &
        \begin{overpic}[width=0.34\textwidth]{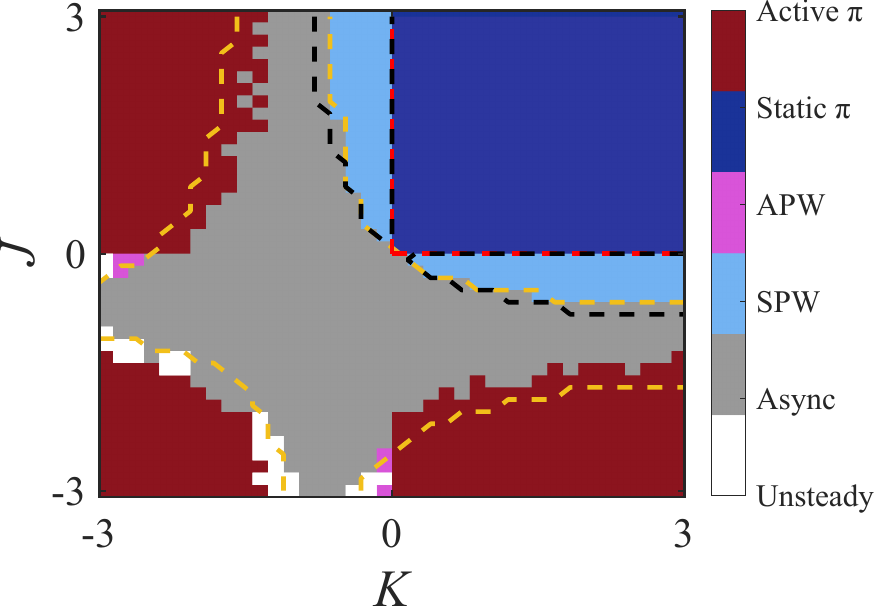}
            \put(4,75){\bfseries (b)}
        \end{overpic}
        &
        \begin{overpic}[width=0.34\textwidth]{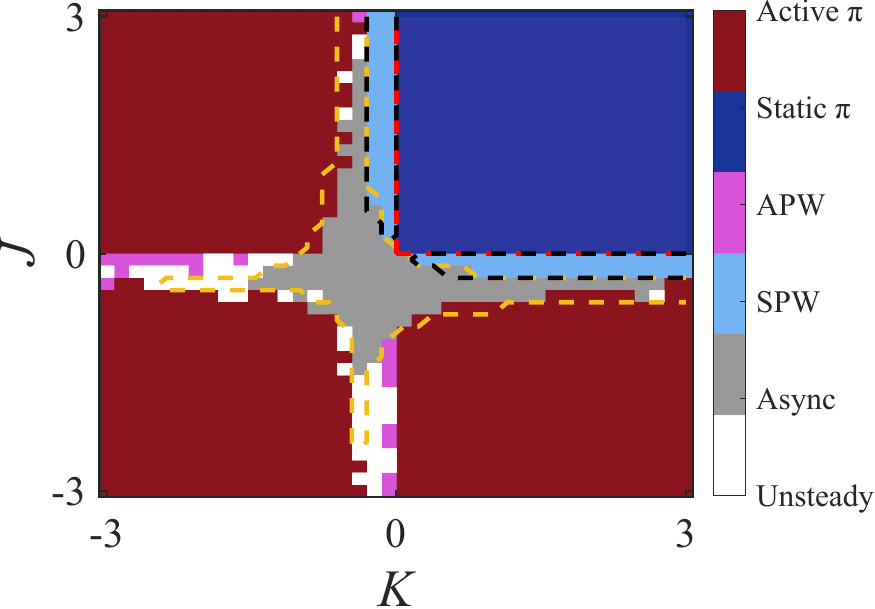}
            \put(4,75){\bfseries (c)}
        \end{overpic}
    \end{tabular}
    \caption{\textbf{Bifurcation diagrams in the $(J,K)$ plane for the asymmetric delay model.}
The delay is fixed to: (a) $\tau=0.5$, (b) $\tau=2.5$, and (c) $\tau=6$.
For the set of initial conditions considered here, with both $x$ and $\theta$
uniformly distributed in $[-\pi,\pi]$, the numerical simulations show that
increasing $\tau$ significantly modifies the stability regions of the collective
states and promotes the emergence of active async dynamics. The threshold
used to distinguish active states from static states is set to $0.1$ for the
mean velocity. The colors indicate the different collective states: dark red corresponds to the
active $\pi$ state, dark blue to the static $\pi$ state, magenta to the active
phase wave (APW), light blue to the static phase wave (SPW), gray to the
async state (Async), and white to unsteady dynamics. The dashed curves
represent the analytically derived stability boundaries: the yellow dashed line
indicates the stability boundary of the async state (Eq.~\eqref{async_1} and Eq.~\eqref{async_2}), the black dashed line
corresponds to the stability boundary of the phase wave state (Eq.~\eqref{phase_wave_boundary}), and the red dashed
line marks the stability boundary of the static $\pi$ state ($J>0,K>0$).}
    \label{fig:J_K}
\end{figure*}

\subsection{Mathematical analysis}
The goal of this section is to analytically determine the conditions on the model parameters ensuring the existence and stability of some of the dynamical states presented in the previous section.

\subsubsection{Stability of the async state}

 We study the linear stability of the spatially and phase-uniform static state shown in panel (g) of Fig.~\ref{fig:different_patterns};  in the continuum limit of infinitely many swarmalator, the latter can be described by the constant density
\begin{equation}
f_0(x,\theta)=\frac{1}{4\pi^2}, \qquad (x,\theta)\in \mathbb{S}^1\times \mathbb{S}^1.
\end{equation}

To parametrize this equilibrium, we introduce Lagrangian labels $(a,b)\in \mathbb{S}^1\times \mathbb{S}^1$ and write
\begin{equation}
x(a,b,t)=a+u(a,b,t), \; \theta(a,b,t)=b+v(a,b,t),
\end{equation}
where $u$ and $v$ are small perturbations. Linearizing the dynamics around the uniform state yields
\begin{widetext}
    \begin{equation}
\partial_t u(a,b,t)=\frac{J}{4\pi^2}\int_0^{2\pi}\int_0^{2\pi}
\Bigl[u(a',b',t-\tau)\cos(a'-a)\cos(b'-b)
-v(a',b',t)\sin(a'-a)\sin(b'-b)\Bigr]\,da'\,db',
\end{equation}
and
\begin{equation}
\partial_t v(a,b,t)=\frac{K}{4\pi^2}\int_0^{2\pi}\int_0^{2\pi}
\Bigl[v(a',b',t-\tau)\cos(a'-a)\cos(b'-b)
-u(a',b',t)\sin(a'-a)\sin(b'-b)\Bigr]\,da'\,db'.
\end{equation}
\end{widetext}

\vspace{0.5cm}

Since the domain is the torus $\mathbb{S}^1\times\mathbb{S}^1$, we expand the perturbations in Fourier series:
\begin{align}
u(a,b,t) &= \sum_{m,n\in\mathbb{Z}} u_{mn}(t)e^{\iota(ma+nb)}, \\
v(a,b,t) &= \sum_{m,n\in\mathbb{Z}} v_{mn}(t)e^{\iota(ma+nb)},
\end{align}

A direct computation shows that the kernels $\cos(a'-a)\cos(b'-b)$ and $\sin(a'-a)\sin(b'-b)$ act nontrivially only on the modes with $|m|=|n|=1$. More precisely we obtain
\begin{equation}
\dot{u}_{mn}(t)=\frac{J}{4}\delta_{\{|m|,1\}}\delta_{\{|n|,1\}}
\bigl(u_{mn}(t-\tau)+mn\,v_{mn}(t)\bigr),
\end{equation}
\begin{equation}
\dot{v}_{mn}(t)=\frac{K}{4}\delta_{\{|m|,1\}}\delta_{\{|n|,1\}}
\bigl(v_{mn}(t-\tau)+mn\,u_{mn}(t)\bigr).
\end{equation}
Hence, all Fourier modes are linearly neutral except the four active modes
\[
(m,n)\in\{(1,1),(1,-1),(-1,1),(-1,-1)\}.
\]
Introducing $s=mn\in\{\pm1\}$, the dynamics of each active mode reduces to
\begin{equation}
\dot{u}(t)=\frac{J}{4}u(t-\tau)+\frac{Js}{4}v(t),
\;
\dot{v}(t)=\frac{K}{4}v(t-\tau)+\frac{Ks}{4}u(t).
\end{equation}

Looking for normal modes of the form
\begin{eqnarray}
u(t)=Ue^{\lambda t}, \qquad v(t)=Ve^{\lambda t},
\end{eqnarray}
with $(U,V)\neq(0,0)$, one obtains the characteristic equation
\begin{equation}
\left(\lambda-\frac{J}{4}e^{-\lambda\tau}\right)
\left(\lambda-\frac{K}{4}e^{-\lambda\tau}\right)
-\frac{JK}{16}=0.
\label{eq:charac_uniform}
\end{equation}
Therefore, the uniform steady state is linearly stable whenever all roots $\lambda$ of \eqref{eq:charac_uniform} satisfy $\Re(\lambda)<0$, up to neutral modes associated with the symmetries of the problem.

In the nondelayed case $\tau=0$, \eqref{eq:charac_uniform} simplifies to
\begin{equation}
\lambda\left(\lambda-\frac{J+K}{4}\right)=0.
\end{equation}
Thus, besides the neutral eigenvalue $\lambda=0$, the nontrivial eigenvalue is
\begin{eqnarray}
\lambda=\frac{J+K}{4},
\end{eqnarray}
and the uniform state is linearly stable if and only if
\begin{equation}
J+K<0.
\end{equation}
For $\tau>0$, the loss of stability is detected by setting $\lambda=\iota\omega$ with $\omega>0$. Separating real and imaginary parts gives
\begin{equation}
\omega^2+\frac{J+K}{4}\omega\sin(\omega\tau)+\frac{JK}{8}\sin^2(\omega\tau)=0,
\end{equation}
\begin{equation}
\cos(\omega\tau)\left(\frac{J+K}{4}\omega+\frac{JK}{8}\sin(\omega\tau)\right)=0.
\end{equation}
The second factor cannot vanish when $\omega\neq0$. Therefore, one must have
\begin{equation}
\cos(\omega\tau)=0,
\end{equation}
which implies
\begin{equation}
\omega\tau=\frac{\pi}{2}+k\pi,
\qquad k\in\mathbb{Z}.
\end{equation}
Since $\omega>0$ and $\tau>0$, the admissible branches are obtained for
$k=0,1,2,\ldots$. The smallest positive value of $\tau$ corresponds to
the first branch $k=0$, namely
\begin{equation}
\omega\tau=\frac{\pi}{2}.
\end{equation}
Hence
\begin{equation}
\omega=\frac{\pi}{2\tau}.
\end{equation}
Substituting into the real part yields
\begin{equation}
\omega^2+\frac{J+K}{4}\omega+\frac{JK}{8}=0.
\end{equation}
Solving for $\omega$ gives
\begin{equation}
\omega_{\pm}
=
\frac{-(J+K)\pm\sqrt{(J+K)^2-8JK}}{8},
\end{equation}
which requires
\begin{equation}
(J+K)^2-8JK\geq 0.
\end{equation}
The corresponding critical delays are
\begin{equation}
\tau_c^{\pm}
=
\frac{\pi}{2\omega_{\pm}}
=
\frac{4\pi}{-(J+K)\pm\sqrt{(J+K)^2-8JK}}.
\end{equation}
Equivalently, using the quadratic relation, we get
\begin{equation}
\tau_c^{\pm}
=
\frac{\pi}{2JK}
\left[-(J+K)\pm\sqrt{(J+K)^2-8JK}\right], JK\neq 0.
\label{async_1}
\end{equation}

\paragraph*{Zero-level curve of the zero-eigenvalue.}

Consider the characteristic equation \ref{eq:charac_uniform} and by substituting $\lambda=0$, we obtain
\begin{eqnarray}
\Delta(0)
=
\left(-\frac{J}{4}\right)\left(-\frac{K}{4}\right)-\frac{JK}{16}
=
\frac{JK}{16}-\frac{JK}{16}
=
0.
\end{eqnarray}
Hence, $\lambda=0$ is always a root of the characteristic equation.

To determine the boundary associated with this root, we require that $\lambda=0$ be a \emph{multiple root}. Therefore, besides $\Delta(0)=0$, we impose
\begin{eqnarray}
\Delta'(0)=0.
\end{eqnarray}
Expanding $\Delta(\lambda)$ gives
\begin{equation}
    \Delta(\lambda)
=
\lambda^2
-\frac{J+K}{4}\lambda e^{-\lambda\tau}
+\frac{JK}{16}e^{-2\lambda\tau}
-\frac{JK}{16}.
\end{equation}
Differentiating with respect to $\lambda$, we obtain
\begin{eqnarray}
\Delta'(\lambda)
=
2\lambda
-\frac{J+K}{4}\frac{d}{d\lambda}\!\left(\lambda e^{-\lambda\tau}\right)
+\frac{JK}{16}\frac{d}{d\lambda}\!\left(e^{-2\lambda\tau}\right),
\end{eqnarray}
which eventually yields
\begin{eqnarray}
\Delta'(\lambda)
=
2\lambda
-\frac{J+K}{4}\left(e^{-\lambda\tau}-\tau\lambda e^{-\lambda\tau}\right)
-\frac{\tau JK}{8}e^{-2\lambda\tau}.
\end{eqnarray}
Evaluating at $\lambda=0$, we get
\begin{eqnarray}
\Delta'(0)
=
-\frac{J+K}{4}-\frac{\tau JK}{8}.
\end{eqnarray}
Therefore, the condition $\Delta'(0)=0$ yields
\begin{eqnarray}
-\frac{J+K}{4}-\frac{\tau JK}{8}=0.
\end{eqnarray}
Thus, the boundary curve in the $(J,K)$-plane associated with the root $\lambda=0$ is
\begin{equation}
    \,2J+2K+\tau JK=0.
    \label{async_2}
\end{equation}

\subsubsection{Stability of the phase wave}

Still considering the model \eqref{eq:x_dot}--\eqref{eq:theta_dot}, we now study the steady phase wave state (see panel (j) of Fig.~\ref{fig:different_patterns}) characterized by
\begin{equation}
x_i^{*} = \frac{2\pi i}{N}, \qquad \theta_i^{*} = \frac{2\pi i}{N}, \qquad i = 1, \dots, N.
\label{eq:etat_base}
\end{equation}

\begin{proposition}
The state \eqref{eq:etat_base} is a steady state of \eqref{eq:x_dot}--\eqref{eq:theta_dot}.
\end{proposition}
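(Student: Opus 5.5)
The plan is to substitute the constant-in-time configuration \eqref{eq:etat_base} into the right-hand sides of \eqref{eq:x_dot}--\eqref{eq:theta_dot} and show that they vanish for every $i$. We use $v_i=\omega_i=0$, and the fact that a steady state satisfies $x_j(t-\tau)=x_j(t)=x_j^*$, so the delay plays no role once we evaluate on a stationary configuration. Writing $\phi_{ij}=2\pi(j-i)/N$, the spatial equation reduces to
\begin{equation*}
\dot x_i=\frac{J}{N}\sum_{j=1}^N \sin(\phi_{ij})\cos(\phi_{ij})=\frac{J}{2N}\sum_{j=1}^N \sin(2\phi_{ij}).
\end{equation*}
The phase equation gives the same expression with $K$ in place of $J$, because the configuration is symmetric under exchanging $x$ and $\theta$.

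The key step is to show that $\sum_{j=1}^N \sin\bigl(4\pi(j-i)/N\bigr)=0$. The index $k=j-i \bmod N$ runs over a complete residue system, so this sum equals $\Im\sum_{k=0}^{N-1} e^{\iota 4\pi k/N}$. This is a geometric sum of $N$-th roots of unity raised to the power $2$, and it vanishes whenever $N\nmid 2$, that is, for $N\ge 3$. In the degenerate cases $N=1,2$ every term $\sin(2\phi_{ij})$ is zero anyway, because $\phi_{ij}\in\{0,\pi\}$. So the sum vanishes for all $N$.

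An alternative argument pairs $k$ with $N-k$. The terms $\sin(4\pi k/N)$ and $\sin(4\pi(N-k)/N)$ are negatives of each other, and the self-paired terms $k=0$ and $k=N/2$ vanish individually. Either argument gives $\dot x_i=\dot\theta_i=0$ for all $i$, so \eqref{eq:etat_base} is an equilibrium, and hence also a constant solution of the delay system with constant initial history.

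There is no real obstacle here. The only point needing care is justifying that the delayed arguments coincide with the current ones. This holds because we are checking a constant solution, whose history on $[-\tau,0]$ is the same configuration. It also helps to note that the shift-invariance in $i$ makes the check independent of $i$.
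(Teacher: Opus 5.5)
Your proof is correct and follows the paper's argument: reduce the summand to $\tfrac12\sin(2\Delta_{ji})$ and show the sum vanishes as the imaginary part of a geometric sum of roots of unity with ratio $e^{4\pi\iota/N}\neq 1$. Two points you make explicit are only implicit in the paper: that the delayed arguments coincide with the current ones on a constant solution, and that the cases $N=1,2$ are trivial.
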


\begin{proof}
Let $\Delta_{ji} = \frac{2\pi(j-i)}{N}$. The summand in \eqref{eq:x_dot} equals $\sin\Delta_{ji}\cos\Delta_{ji} = \frac{1}{2}\sin(2\Delta_{ji})$. Setting $m = j-i$, we obtain
\begin{eqnarray}
\sum_{j=1}^N \sin\Delta_{ji}\cos\Delta_{ji} = \frac{1}{2}\operatorname{Im}\!\left(\sum_{m=0}^{N-1} e^{4\pi \iota m/N}\right) = 0,
\end{eqnarray}
since the geometric sum with ratio $e^{4\pi \iota /N}\neq 1$ (for $N\geq 3$) vanishes. The computation is identical for \eqref{eq:theta_dot}, so $\dot{x}_i = \dot{\theta}_i = 0$.
\end{proof}

\paragraph*{Linearization.}
We perturb the steady state by setting
\begin{equation}
x_i(t) = \frac{2\pi i}{N} + \xi_i(t), \qquad \theta_i(t) = \frac{2\pi i}{N} + \phi_i(t),
\label{eq:perturbation}
\end{equation}
with $|\xi_i|, |\phi_i| \ll 1$. Expanding to first order using $\sin(a+\varepsilon)\approx\sin a + \varepsilon\cos a$ and $\cos(a+\varepsilon)\approx\cos a - \varepsilon\sin a$, and noting that the zeroth-order terms cancel (as in the proof above), the linearized equations read
\begin{align}
\dot{\xi}_i(t) &= \frac{J}{N}\sum_{j=1}^N \Bigl[ \cos^2\!\Delta_{ji}\,\bigl(\xi_j(t-\tau)-\xi_i(t)\bigr) \nonumber \\
&\quad - \sin^2\!\Delta_{ji}\,\bigl(\phi_j(t)-\phi_i(t)\bigr) \Bigr], \label{eq:lin_xi}\\
\dot{\phi}_i(t) &= \frac{K}{N}\sum_{j=1}^N \Bigl[ \cos^2\!\Delta_{ji}\,\bigl(\phi_j(t-\tau)-\phi_i(t)\bigr) \nonumber \\
&\quad - \sin^2\!\Delta_{ji}\,\bigl(\xi_j(t)-\xi_i(t)\bigr) \Bigr]. \label{eq:lin_phi}
\end{align}
\paragraph*{Fourier decomposition.}
Since the coefficients in \eqref{eq:lin_xi}--\eqref{eq:lin_phi} depend only on $m = j-i$, we seek solutions of the form
\begin{equation}
\xi_i(t) = \hat{\xi}_k\,e^{\lambda t}\,e^{2\pi \iota k i/N}, \qquad \phi_i(t) = \hat{\phi}_k\,e^{\lambda t}\,e^{2\pi \iota k i/N}.
\label{eq:ansatz}
\end{equation}
The relevant Fourier sums are
\begin{align}
S_k^c &:= \frac{1}{N}\sum_{m=0}^{N-1} \cos^2\!\!\left(\frac{2\pi m}{N}\right) e^{2\pi \iota km/N}, \label{eq:Skc}\\
S_k^s &:= \frac{1}{N}\sum_{m=0}^{N-1} \sin^2\!\!\left(\frac{2\pi m}{N}\right) e^{2\pi \iota km/N}. \label{eq:Sks}
\end{align}
Using the identities $\cos^2\alpha = \frac{1}{2}+\frac{1}{4}e^{2\iota\alpha}+\frac{1}{4}e^{-2\iota\alpha}$ and $\sin^2\alpha = \frac{1}{2}-\frac{1}{4}e^{2\iota\alpha}-\frac{1}{4}e^{-2\iota\alpha}$, together with the geometric sum $\frac{1}{N}\sum_{m=0}^{N-1}e^{2\pi \iota\ell m/N}=\delta_{\ell\equiv 0\pmod{N}}$, one obtains for $N\geq 5$:
\begin{align}
S_k^c &= \frac{1}{2}\delta_{k,0} + \frac{1}{4}\delta_{k,2} + \frac{1}{4}\delta_{k,N-2}, \\ S_k^s &= \frac{1}{2}\delta_{k,0} - \frac{1}{4}\delta_{k,2} - \frac{1}{4}\delta_{k,N-2}.
\label{eq:Skc_Sks}
\end{align}
Substituting \eqref{eq:ansatz} into \eqref{eq:lin_xi}--\eqref{eq:lin_phi} and using $\frac{1}{N}\sum_j\cos^2\!\Delta_{ji} = \frac{1}{2}$, the modal equations are
\begin{align}
\lambda\hat\xi_k &= J\!\left(S_k^c e^{-\lambda\tau} - \tfrac{1}{2}\right)\hat\xi_k - J\!\left(S_k^s - \tfrac{1}{2}\right)\hat\phi_k, \label{eq:modale_xi}\\
\lambda\hat\phi_k &= K\!\left(S_k^c e^{-\lambda\tau} - \tfrac{1}{2}\right)\hat\phi_k - K\!\left(S_k^s - \tfrac{1}{2}\right)\hat\xi_k. \label{eq:modale_phi}
\end{align}

\paragraph*{Modes $k\neq 0,2,N-2$.}
For these modes, $S_k^c = S_k^s = 0$, and \eqref{eq:modale_xi}--\eqref{eq:modale_phi} reduce to
\begin{eqnarray}
\lambda \begin{pmatrix}\hat\xi_k \\ \hat\phi_k\end{pmatrix} = \frac{1}{2}\begin{pmatrix} -J & J \\ K & -K\end{pmatrix} \begin{pmatrix}\hat\xi_k \\ \hat\phi_k\end{pmatrix}.
\end{eqnarray}
The eigenvalues of this matrix are
\begin{equation}
\lambda_1 = 0, \qquad \lambda_2 = -\frac{J+K}{2}.
\end{equation}
Thus, these modes are stable if and only if $J + K > 0$.

\paragraph*{Mode $k=0$.}
For $k=0$, we have $S_0^c = S_0^s = \tfrac{1}{2}$, and therefore both coupling terms vanish. 
The equations decouple and become
\begin{equation}
\lambda\hat{\xi}_0 = \frac{J}{2}\bigl(e^{-\lambda\tau}-1\bigr)\hat{\xi}_0,
\;
\lambda\hat{\phi}_0 = \frac{K}{2}\bigl(e^{-\lambda\tau}-1\bigr)\hat{\phi}_0 .
\end{equation}
Each equation can be written in the form
\begin{eqnarray}
(\lambda + a)e^{\lambda\tau} = a,
\end{eqnarray}
with $a=\tfrac{J}{2}$ for the $\hat{\xi}_0$ mode and $a=\tfrac{K}{2}$ for the $\hat{\phi}_0$ mode. 
Using the Lambert $W$ function, the complete sets of characteristic roots are therefore
\begin{equation}
\lambda_n^{(\xi)}=
\frac{1}{\tau}W_n\!\left(\frac{J\tau}{2}e^{J\tau/2}\right)-\frac{J}{2},
\qquad n\in\mathbb{Z},
\end{equation}
and
\begin{equation}
\lambda_n^{(\phi)}=
\frac{1}{\tau}W_n\!\left(\frac{K\tau}{2}e^{K\tau/2}\right)-\frac{K}{2},
\qquad n\in\mathbb{Z}.
\end{equation}
Thus, the characteristic roots are given by the different branches of the Lambert $W$ function.

\vspace{0.5cm}

\paragraph*{Modes $k=2$ and $k=N-2$.}
For $k=2$ (with $N>4$) we obtain $S_2^c = \frac{1}{4}$ and $S_2^s = -\frac{1}{4}$. Setting
\begin{equation}
\mu(\lambda) := \frac{e^{-\lambda\tau}}{4} - \frac{1}{2},
\label{eq:mu}
\end{equation}
the modal equations become
\begin{eqnarray}
\lambda\hat\xi = J\mu(\lambda)\,\hat\xi + \frac{3J}{4}\,\hat\phi, \;
\lambda\hat\phi = K\mu(\lambda)\,\hat\phi + \frac{3K}{4}\,\hat\xi.
\end{eqnarray}
A nontrivial solution exists if and only if the determinant vanishes, giving the characteristic equation
\begin{equation}
\lambda^2 - (J+K)\mu(\lambda)\,\lambda + JK\!\left[\mu(\lambda)^2 - \frac{9}{16}\right] = 0.
\label{eq:carac_k2}
\end{equation}

\medskip
\noindent\textit{Case $\tau=0$.}
Here $\mu(0) = -\frac{1}{4}$, and \eqref{eq:carac_k2} reduces to the polynomial
\begin{equation}
\lambda^2 + \frac{J+K}{4}\,\lambda - \frac{JK}{2} = 0.
\label{eq:poly_tau0}
\end{equation}
For all roots of \eqref{eq:poly_tau0} to satisfy $\Re(\lambda)<0$, the coefficients must be positive, namely
\begin{equation}
\frac{J+K}{4} > 0
 \Longleftrightarrow 
J+K > 0,
-\frac{JK}{2} > 0
\Longleftrightarrow
JK < 0.
\end{equation}
The stability condition for mode $k=2$ without delay is therefore
\begin{equation}
J + K > 0 \quad \text{and} \quad JK < 0.
\label{eq:stab_k2_tau0}
\end{equation}

\medskip
\noindent\textit{Case $\tau>0$: Hopf bifurcation.}
We seek a purely imaginary eigenvalue $\lambda = \iota \omega$, $\omega > 0$. Writing $\mu(\iota\omega) = \mu_R - \iota \mu_I$ with
\begin{eqnarray}
\mu_R = \frac{\cos\omega\tau - 2}{4}, \qquad \mu_I = \frac{\sin\omega\tau}{4},
\end{eqnarray}
and separating real and imaginary parts of \eqref{eq:carac_k2}, we obtain
\begin{align}
-\omega^2 - (J+K)\omega\,\mu_I + JK\!\left(\mu_R^2 - \mu_I^2 - \tfrac{9}{16}\right) &= 0, \label{eq:R}\\
-(J+K)\omega\,\mu_R - 2JK\,\mu_R\mu_I &= 0. \label{eq:I}
\end{align}
From \eqref{eq:I}, factoring out $\frac{\cos\omega\tau - 2}{4}\neq 0$ yields
\begin{equation}
\sin\omega\tau = -\frac{2(J+K)}{JK}\,\omega.
\label{eq:sin_wt}
\end{equation}
Substituting \eqref{eq:sin_wt} into \eqref{eq:R} and using $\cos^2\omega\tau + \sin^2\omega\tau = 1$ to eliminate $\sin\omega\tau$, one arrives at the quadratic equation in $c = \cos\omega\tau$:
\begin{eqnarray}
c^2 - 2c - 3 + \frac{4(J^2+K^2)}{J^2K^2}\,\omega^2 = 0,
\end{eqnarray}
whose solution is
\begin{equation}
\cos(\omega\tau)= -1-\frac{4\omega^2}{JK}.
\label{eq:cos_wt}
\end{equation}
Enforcing $\cos^2\omega\tau + \sin^2\omega\tau = 1$ with \eqref{eq:sin_wt} and \eqref{eq:cos_wt}, setting $u = \omega_c^2$, one finds
\begin{eqnarray}
u\!\left(4u + (J+K)^2 + 2JK\right) = 0,
\end{eqnarray}
so the non-trivial Hopf frequency satisfies
\begin{equation}
\omega_c^2 = -\frac{(J+K)^2 + 2JK}{4}.
\label{eq:omega_c}
\end{equation}
This is real and positive only if
\begin{equation}
(J+K)^2 + 2JK < 0,
\label{eq:omega_c_positif}
\end{equation}
which requires $J$ and $K$ to have opposite signs with $|JK|$ sufficiently large. The corresponding critical delay is
\begin{equation}
\tau_c^{(n)}=\frac{1}{\omega_c}\left(\mathrm{Arg}(c_c+ \iota s_c)+2\pi n\right),
\qquad n\in\mathbb{Z}.
\end{equation}
where
\begin{equation}
c_c=-1-\frac{4\omega_c^2}{JK}, 
\qquad 
s_c=-\frac{2(J+K)}{JK}\,\omega_c.
\end{equation}
Thus, 
\begin{equation}      
\tau_c(J,K)
=
\frac{4}{\sqrt{\Tilde{A}}}
\arctan\!\left(
\frac{J+K}{\sqrt{\Tilde{A}}}
\right),
\label{phase_wave_boundary}
\end{equation}
where $\Tilde{A} = -(J^2+4JK+K^2)$.

We also consider the case \(\lambda=0\), since stability may change when a root crosses the imaginary axis at the origin. Substituting \(\lambda=0\) into the characteristic equation gives $JK = 0$, so we can have \(J=0\) or \(K=0\). These lines therefore define stationary critical boundaries in the parameter plane.


\vspace{0.5cm}

\subsubsection{Stability of the active $\pi$ state}

We consider a two-cluster solution of the form
\begin{equation}
(x_i(t),\theta_i(t))=
\begin{cases}
(x_s+\omega_1 t,\theta_s+\omega_2 t), & i\in G_1,\\[2pt]
(x_s+\pi+\omega_1 t,\theta_s+\pi+\omega_2 t), & i\in G_2\, ,
\end{cases}
\label{eq:two_cluster_ansatz}
\end{equation}
where $G_1$ and $G_2$ form a partition of $\{1,\dots,N\}$ (see panel (d) of Fig.~\ref{fig:different_patterns}). Substituting the latter into the equations of motion and using that the cluster shift is either $0$ or $\pi$, one finds that all interaction terms are identical. This yields the existence conditions
\begin{equation}
\omega_1=-J\sin(\omega_1\tau),
\qquad
\omega_2=-K\sin(\omega_2\tau).
\label{eq:two_cluster_existence}
\end{equation}

To study linear stability, we perturb the solution as
\begin{eqnarray}
x_i(t)=x_s+\chi_i+\omega_1 t+\delta x_i(t),
\\
\theta_i(t)=\theta_s+\chi_i+\omega_2 t+\delta\theta_i(t),
\end{eqnarray}
with $\chi_i\in\{0,\pi\}$ and $|\delta x_i|,|\delta\theta_i|\ll1$. A first-order expansion of the delayed interaction terms gives
\begin{align}
\dot{\delta x_i}(t)
&=
\frac{J}{N}\sum_{j=1}^N
\cos(\omega_1\tau)\big(\delta x_j(t-\tau)-\delta x_i(t)\big), \\
\dot{\delta\theta_i}(t)
&=
\frac{K}{N}\sum_{j=1}^N
\cos(\omega_2\tau)\big(\delta\theta_j(t-\tau)-\delta\theta_i(t)\big).
\end{align}
Expanding the perturbations in discrete Fourier modes and separating the
uniform and nonuniform components, one obtains for all $m\neq 0$
\begin{equation}
\mu_m=-J\cos(\omega_1\tau),
\qquad
\lambda_m=-K\cos(\omega_2\tau).
\label{eq:nonuniform_modes}
\end{equation}
Hence the nonuniform modes are stable if and only if
\begin{equation}
J\cos(\omega_1\tau)>0,
\qquad
K\cos(\omega_2\tau)>0.
\label{eq:nonuniform_stability}
\end{equation}

For the uniform mode $m=0$, the characteristic roots can be obtained in closed form. One finds
\begin{equation}
\mu_n=
\frac{1}{\tau}
W_n\!\Big(J\tau\cos(\omega_1\tau)\,e^{J\tau\cos(\omega_1\tau)}\Big)
-
J\cos(\omega_1\tau),
\end{equation}
and similarly
\begin{equation}
\lambda_n=
\frac{1}{\tau}
W_n\!\Big(K\tau\cos(\omega_2\tau)\,e^{K\tau\cos(\omega_2\tau)}\Big)
-
K\cos(\omega_2\tau),
\end{equation}
where $W_n$ denotes the $n$th branch of the Lambert $W$ function.

The above analysis provides the existence and stability conditions of the active \(\pi\)-state. We now complement these analytical results by examining how the corresponding frequency branches are organized as the delay \(\tau\) varies. More precisely, the branches are obtained by solving the transcendental equations in \eqref{eq:two_cluster_existence} for each value of \(\tau\), and then reconstructing the corresponding frequencies \(\omega_x\) and \(\omega_\theta\). Introducing
\begin{eqnarray}
y_x=\omega_x\tau,
\qquad
y_\theta=\omega_\theta\tau,
\end{eqnarray}
the existence conditions become
\begin{equation}
    y_x+J\tau\sin(y_x)=0,
\;
y_\theta+K\tau\sin(y_\theta)=0, \label{eqqq}
\end{equation}
so that the frequencies are recovered through
\begin{eqnarray}
\omega_x=\frac{y_x}{\tau},
\qquad
\omega_\theta=\frac{y_\theta}{\tau}.
\end{eqnarray}

For \(J=1\), we consider two values of the phase-coupling parameter, namely \(K=-2\) and \(K=1\). The resulting branch diagrams are shown in Fig.~\ref{fig:branches}. Panels (a,b) correspond to \(K=-2\), whereas panels (c,d) correspond to \(K=1\). The left panels, (a,c), display the branches of \(\omega_x\), while the right panels, (b,d), display those of \(\omega_\theta\). The stability of each branch is determined from condition \eqref{eq:nonuniform_stability}. In the figure, stable branches are represented by solid blue curves, whereas unstable branches are shown by red dashed curves. As \(\tau\) increases, additional branches appear because of the periodicity of the sine function, leading to the coexistence of several active \(\pi\)-states with alternating stability.

\begin{remark}
When several stable branches coexist for the same parameter values, the asymptotic frequency selected by the dynamics may depend on the initial condition. In that case, each branch may possess its own basin of attraction, and the observed long-time behavior is determined by the basin in which the initial condition lies.
\end{remark}

For $J>0$, the trivial solution $y_x=0$ is always stable. For $J<0$, a stable nontrivial branch exists only if $y_x \in (\pi/2,\pi)$ (up to symmetry), which requires
 \begin{eqnarray}
 -J\tau>\frac{\pi}{2}.
 \end{eqnarray}
Thus, stable solutions for the $x$ dynamics exist when
\begin{equation}
 J>0
 \qquad\text{or}\qquad
 J<-\frac{\pi}{2\tau}.
\label{eq:J_stable_region}
\end{equation}
By the same argument, the $\theta$ dynamics admits stable solutions when
\begin{equation}
 K>0
 \qquad\text{or}\qquad
 K<-\frac{\pi}{2\tau}.
\label{eq:K_stable_region}
\end{equation}
Hence the stable region in the $(J,K)$ plane lies inside
\begin{equation}
 \left((-\infty,-\tfrac{\pi}{2\tau})\cup(0,\infty)\right)
 \times
 \left((-\infty,-\tfrac{\pi}{2\tau})\cup(0,\infty)\right).
\label{eq:JK_stable_region}
\end{equation}
The portion of this region that lies in the first quadrant (i.e., $J>0$ and $K>0$) corresponds to the solution $\omega_x = \omega_{\theta} = 0$, which indicates the static $\pi$ state.

\begin{figure}[t!]
    \centering
    \begin{tabular}{ccccccccc}
        \begin{overpic}[width=0.48\linewidth]{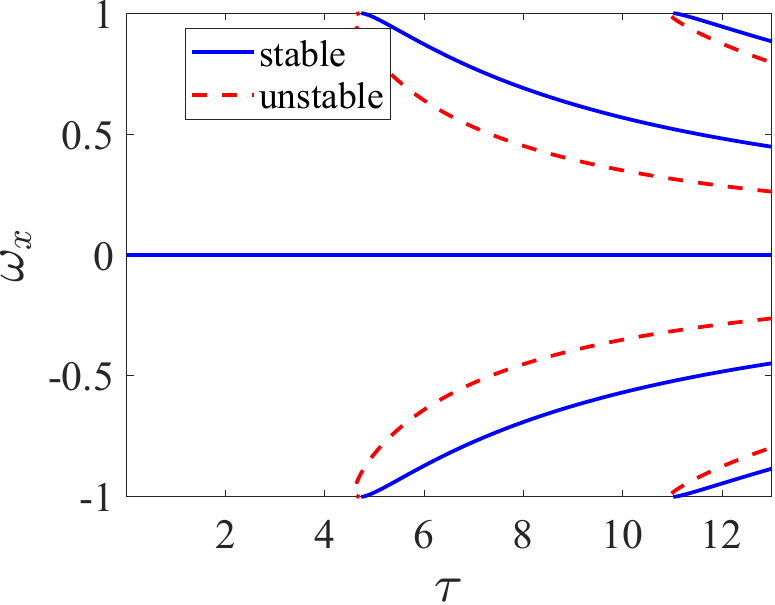}
            \put(4,81){\bfseries (a)}
        \end{overpic}
        &&&&
        \begin{overpic}[width=0.48\linewidth]{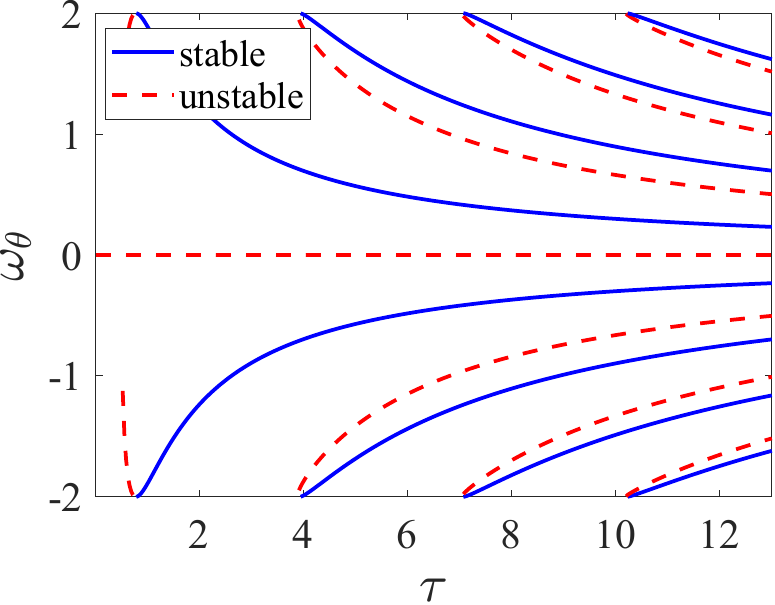}
            \put(4,81){\bfseries (b)}
        \end{overpic}
        \\
        \begin{overpic}[width=0.48\linewidth]{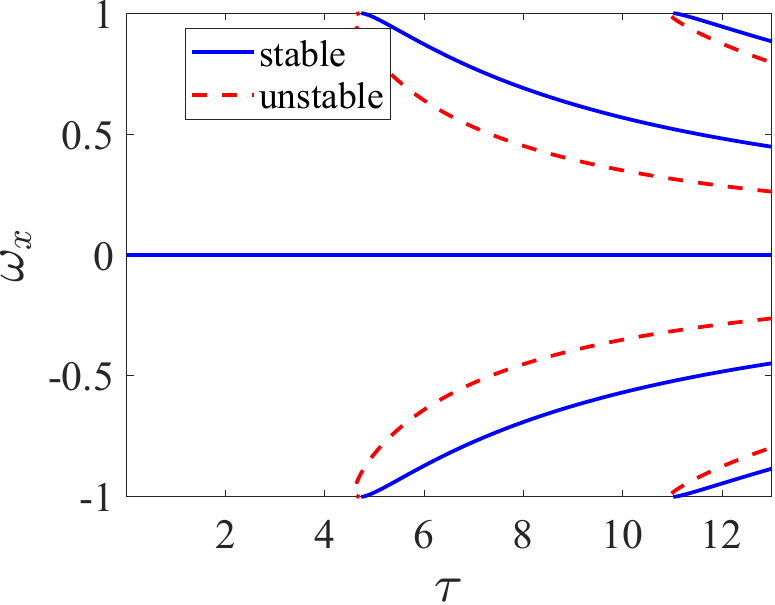}
            \put(4,81){\bfseries (c)}
        \end{overpic}
        &&&&
        \begin{overpic}[width=0.48\linewidth]{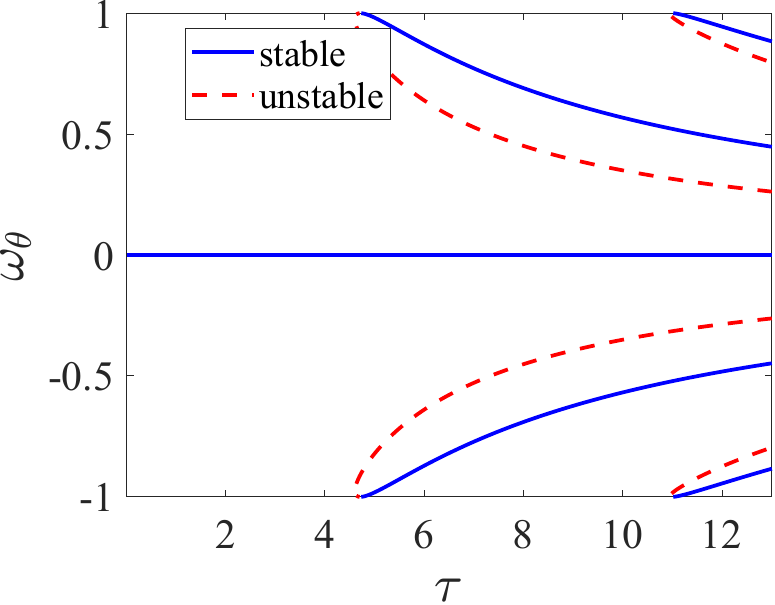}
            \put(4,81){\bfseries (d)}
        \end{overpic}
        \\
    \end{tabular}
\caption{{\bf Numerically computed frequency branches as functions of the delay.} Here we use \(J=1\). Panels (a,b) correspond to \(K=-2\), whereas panels (c,d) correspond to \(K=1\). The left panels, (a,c), display \(\omega_x\), while the right panels, (b,d), display \(\omega_\theta\). Solid blue curves denote stable branches, while red dashed curves denote unstable ones.}
    \label{fig:branches}
\end{figure}

\begin{figure*}[ht!]
    \centering
    \begin{tabular}{ccc}
        \begin{overpic}[width=0.3\linewidth]{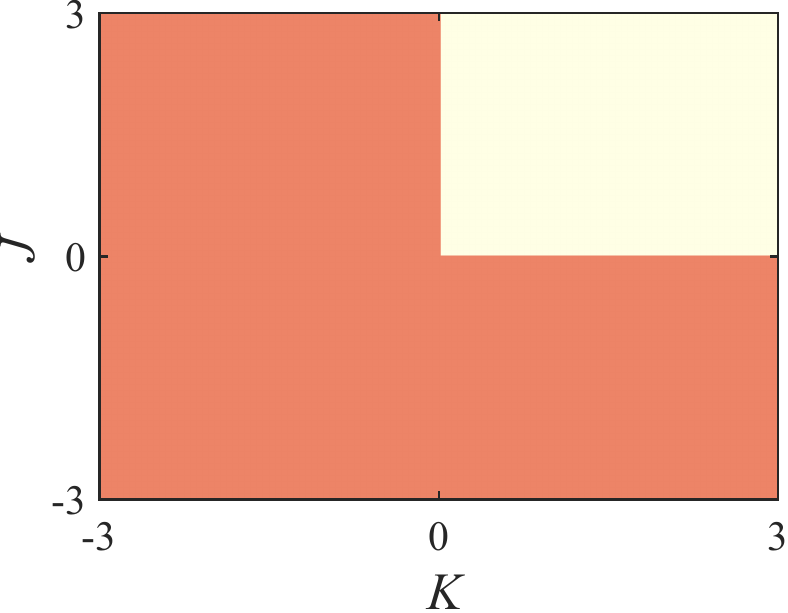}
            \put(4,85){\bfseries (a)}
        \end{overpic}
        &
        \begin{overpic}[width=0.3\linewidth]{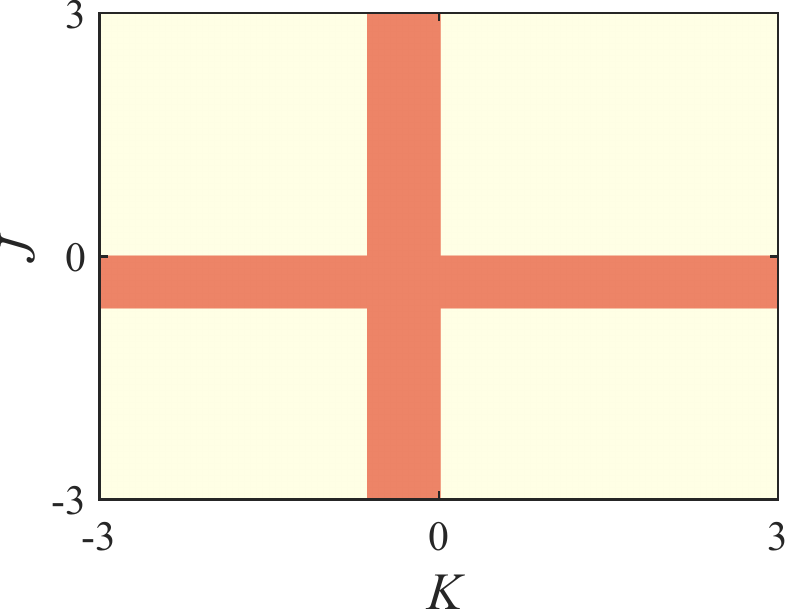}
            \put(4,85){\bfseries (b)}
        \end{overpic}
        &
        \begin{overpic}[width=0.3\linewidth]{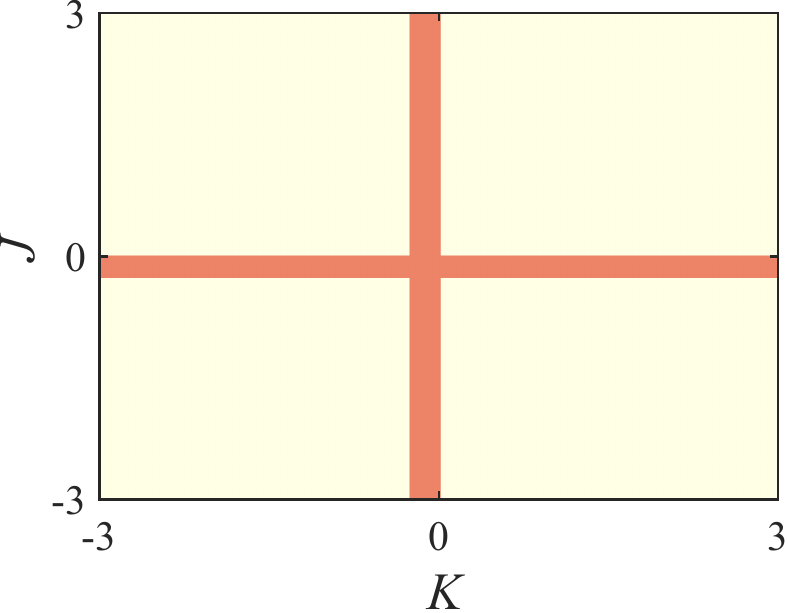}
            \put(4,85){\bfseries (c)}
        \end{overpic}
    \end{tabular}
    \caption{{\bf Stability regions of the $\pi$ states in the $(J,K)$ plane for the asymmetric delay model.}
    The light beige regions correspond to parameter values for which at least one stable pair of frequency branches $(\omega_x,\omega_\theta)$ exists, whereas the orange regions indicate instability. Panels (a), (b), and (c) correspond respectively to $\tau=0.5$, $\tau=2.5$, and $\tau=6$.}
    \label{fig:pi_state_region_stability}
\end{figure*}

Figure~\ref{fig:pi_state_region_stability} shows the stability regions of the $\pi$-state in the $(J,K)$ parameter plane. The light beige regions correspond to parameter values for which the $\pi$-state is stable, whereas the orange regions indicate instability. Panels~\ref{fig:pi_state_region_stability}(a), \ref{fig:pi_state_region_stability}(b), and \ref{fig:pi_state_region_stability}(c) correspond respectively to $\tau=0.5$, $\tau=2.5$, and $\tau=6$. These stability diagrams are constructed by scanning the values of $J$ and $K$ separately. For each value of $J$, we solve the existence equation in~\eqref{eqqq} to determine all admissible frequency branches $\omega_x$. Similarly, for each value of $K$, we compute the corresponding branches $\omega_\theta$. Each branch is then tested using the stability conditions derived above. Therefore, for a given pair $(J,K)$, the  $\pi$ state is classified as stable whenever there exists at least one pair of frequencies $(\omega_x,\omega_\theta)$ satisfying both the existence equations and the stability conditions. Compared with the symmetric model shown in Fig.~\ref{fig:pi_state_region_stability_sym}, we observe that increasing the delay $\tau$ enlarges the stability region of the $\pi$ state. This indicates that the delay can promote the stabilization of active $\pi$ states over a wider range of coupling parameters.

\subsection{Bistability}
The stability analyses in the previous section establishes parameter regions where the async state is linearly stable and the active $\pi$ state exist, but it does not by itself preclude these regions from overlapping. Indeed, direct numerical integration of Eqs.~\eqref{eq:x_dot}-\eqref{eq:theta_dot} reveals that such overlap does occur: for a range of coupling parameters, the two states coexist as stable attractors, and the asymptotic dynamics selected by the system depends on the initial condition.

To quantify this phenomenon, we fix $K=1.35$ and $\tau = 2.5$ and vary $J$ across the region near the boundary of these states.  For each value of $J$, we integrate the system from an ensemble of random initial conditions, with $x_i$ and $\theta_i$ drawn independently and uniformly from $[-\pi, \pi]$, and classify the long-time state reached in each realization using the order parameters and the mean velocity. The fractions of realizations converging to the active $\pi$ and async states are plotted as functions of $J$ in Fig.~\ref{bistability}.
\begin{figure}[htp]
    \centering
    \includegraphics[width=0.9\linewidth]{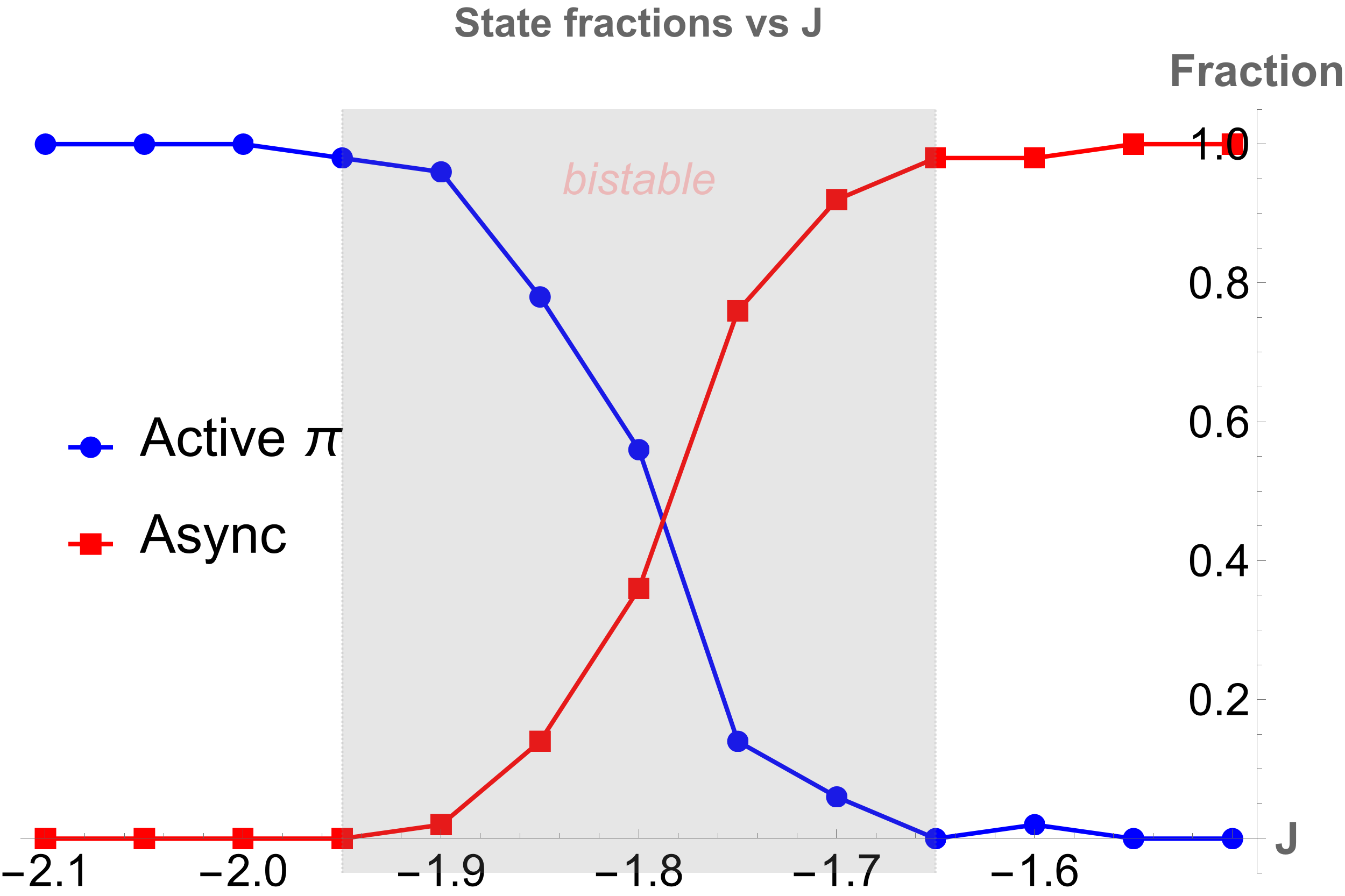}
    \caption{{\bf Bistability between the active $\pi$ and async states.} Fraction of random initial conditions, drawn uniformly from $[-\pi,\pi]^2$, converging to the active $\pi$ state (blue circles) and the async state (red squares) as a function of $J$, at fixed $K = 1.35$ and $\tau = 2.5$. The shaded region $-1.95 \lesssim J \lesssim -1.65$ marks the bistable window in which both states are observed with nonzero probability. Simulation parameters: $(N, t, dt) = (1000, 200, 0.01)$. At each point, $50$ realizations have been taken to calculate the fractions of the two emerging states.}
    \label{bistability}
\end{figure}

Three regimes emerge. For $J \lesssim 1.95$, every initial condition converges to the active $\pi$ state: the async state has a vanishing basin of attraction at these parameters. For $J \gtrsim -1.65$, the situation is reversed and almost every realization converges to the async state. In the intermediate window $-1.95 \lesssim J \lesssim -1.65$, both attractors capture a finite fraction of the trajectories, and the fraction of active $\pi$ realizations decreases monotonically with $J$ while the async fraction grows correspondingly. At $J \approx -1.8$, the two basins are approximately balanced, with roughly half of the random initial conditions selecting each state. This value is close to the theoretical critical $J_c = -1.86$ (derived by solving Eq.~\eqref{async_1} with $K=1.35$, and $\tau = 2.5$) where the static async state is predicted to lose its stability. This intermediate window is therefore a genuine bistable region, in which the asymptotic behavior is not determined by the coupling parameters alone but also by the initial data.

\section{Discussion}

We have presented a theoretical study of delay-coupled swarmalators, with a particular focus on the asymmetric case. We derived the stability boundaries of the three main collective states of the model: the async state, the phase wave state, and the $\pi$ states. This analysis allowed us to construct the full phase diagram in the $(J,K)$ parameter space. We observe that the delay can play the role of a control parameter. Moreover, depending on the way the asymmetric model is defined, the presence of delay can promote ordered states, such as the active $\pi$ state, or generate unsteady states that are absent in the model without delay. For instance, when the delay is introduced only in the ``sine" interaction term, increasing the delay enlarges the stability region of the $\pi$ state. Such dynamics may be relevant to populations of interacting organisms, including predator-prey systems \cite{van2017disentangling}. Similar anti-phase patterns can also be found in biological motor systems, for example in left-right neural activity during swimming or locomotion. Also, as discussed in ~\cite{o2026time}, unsteady states may be observable in populations of vinegar eels, sperm cells, or other real-world swarmalators.

Future work could extend our study to the regime where the natural frequencies and velocities are non-identical. Another possible direction would be to extend the model to two or three spatial dimensions. It would also be interesting to investigate the effects of external forcing, or higher-order interactions with delay.

\section*{Data Availability Statement}
This study did not involve the use of external datasets. The computational codes supporting the findings of this work are available from the authors upon reasonable request.

\section*{Acknowledgments}
G.K.S. acknowledges financial support through Canadian Neuroanalytics scholars (CNS) program.

\section*{Author Contributions}
\noindent

R.T.D.: conceptualization, data curation, formal analysis, investigation, methodology, software, validation, visualization, and writing - original draft. 
G.K.S.: conceptualization, methodology, software, validation, visualization, and writing - original draft. 
T.C.: funding acquisition, project administration, resources, supervision, validation, and writing - review \& editing. 
All authors read and approved the manuscript.

\appendix

\setcounter{equation}{0}
\renewcommand{\theequation}{A\arabic{equation}}
\setcounter{figure}{0}
\renewcommand{\thefigure}{B\arabic{figure}}

\section{Another variant of asymmetric delay}
\label{appA}
There are several ways to define asymmetric delay. For the following asymmetric model, we observe several dynamical regimes, including the async state, the phase wave state, the static async state, and the synchronized state with an unsteady profile. We briefly show below how the linear stability analysis can also be carried out for this alternative asymmetric model:
\begin{align}
\dot{x}_i &= \frac{J}{N}\sum_{j=1}^N \sin \bigl(x_j-x_i\bigr)\cos \bigl(\theta_j(t-\tau)-\theta_i\bigr), 
\label{eq:x_dynamics} \\
\dot{\theta}_i &= \frac{K}{N}\sum_{j=1}^N \sin \bigl(\theta_j-\theta_i\bigr)\cos \bigl(x_j(t-\tau)-x_i\bigr).
\label{eq:theta_dynamics}
\end{align}

\subsection{ Async state}

We consider the async state
\begin{eqnarray}
f_0(x,\theta)=\frac{1}{4\pi^2},
\qquad (x,\theta)\in \mathbb{S}^1\times \mathbb{S}^1,
\end{eqnarray}
and introduce Lagrangian labels $(a,b)\in \mathbb{S}^1\times \mathbb{S}^1$ such that
\begin{eqnarray}
x(a,b,t)=a+u(a,b,t),\; \theta(a,b,t)=b+v(a,b,t),
\end{eqnarray}
with $u,v$ small perturbations. Linearizing around $f_0$ gives
\begin{widetext}
    \begin{align}
\partial_t u(a,b,t)
&=
\frac{J}{4\pi^2}
\int_0^{2\pi}\!\!\int_0^{2\pi}
\Big[
u(a',b',t)\cos(a'-a)\cos(b'-b)-
v(a',b',t-\tau)\sin(a'-a)\sin(b'-b)
\Big]\,
da'\,db',
\\
\partial_t v(a,b,t)
&=
\frac{K}{4\pi^2}
\int_0^{2\pi}\!\!\int_0^{2\pi}
\Big[
v(a',b',t)\cos(a'-a)\cos(b'-b) -
u(a',b',t-\tau)\sin(a'-a)\sin(b'-b)
\Big]\,
da'\,db'.
\end{align}
\end{widetext}

Expanding in Fourier series,
\begin{align}
u(a,b,t)&=\sum_{m,n\in\mathbb Z}u_{mn}(t)e^{\iota (ma+nb)}, \\ 
v(a,b,t)&=\sum_{m,n\in\mathbb Z}v_{mn}(t)e^{\iota (ma+nb)},
\end{align}
one finds that only the modes $|m|=|n|=1$ are active. For these modes,
\begin{align}
    \dot u_{mn}(t)&=\frac{J}{4}\Big(u_{mn}(t)+mn\,v_{mn}(t-\tau)\Big), \\
    \dot v_{mn}(t)&=\frac{K}{4}\Big(v_{mn}(t)+mn\,u_{mn}(t-\tau)\Big).
\end{align}
Setting $s=mn\in\{\pm1\}$ and seeking normal modes
\begin{eqnarray}
u(t)=Ue^{\lambda t},\qquad v(t)=Ve^{\lambda t},
\end{eqnarray}
yields the characteristic equation
\begin{eqnarray}
\left(\lambda-\frac{J}{4}\right)\left(\lambda-\frac{K}{4}\right)
-\frac{JK}{16}e^{-2\lambda\tau}=0.
\end{eqnarray}
For $\tau=0$, this reduces to
\begin{eqnarray}
\lambda\left(\lambda-\frac{J+K}{4}\right)=0,
\end{eqnarray}
so the async state is stable iff
\begin{eqnarray}
J+K<0,
\end{eqnarray}
For $\tau>0$, setting $\lambda=\iota \omega$ with $\omega > 0$ gives
\begin{eqnarray}
\omega^2=\frac{JK}{8}\sin^2(\omega\tau),
\;
\frac{J+K}{4}\,\omega=\frac{JK}{16}\sin(2\omega\tau).
\end{eqnarray}
Combining these two relations with $\sin^2(\omega\tau)+\cos^2(\omega\tau)=1$ leads to
\begin{eqnarray}
16\omega^2+J^2+K^2=0,
\end{eqnarray}
which is impossible for $\omega > 0$. Hence there is no nontrivial Hopf bifurcation for the async state in this model.

Therefore, the stability boundary is determined only by the zero eigenvalue. Imposing that $\lambda=0$ be a double root gives
\begin{eqnarray}
\Delta'(0)=0
\Longrightarrow
-\frac{J+K}{4}+\frac{\tau JK}{8}=0,
\end{eqnarray}
that is,
\begin{eqnarray}
2J+2K-\tau JK=0.
\end{eqnarray}

However, this condition describes the stability boundary only in the parameter region where no other unstable eigenvalue is already present. In particular, it is relevant in the mixed-sign sector $JK<0$, where the loss of stability can occur through a real eigenvalue crossing at $\lambda=0$. In that case, the async state is stable  when
\begin{eqnarray}
2J+2K-\tau JK<0.
\end{eqnarray}
By contrast, when $J>0$ and $K>0$, a positive real eigenvalue already exists, so the condition $\Delta'(0)=0$ is not the global stability boundary in that sector.


\subsection{Phase wave  }
The phase wave state
\begin{eqnarray}
x_i^{(0)}=\theta_i^{(0)}=\frac{2\pi i}{N}, \qquad i=1,\dots,N,
\end{eqnarray}
is again a steady state. We perturb it as
\begin{eqnarray}
x_i(t)=\frac{2\pi i}{N}+\xi_i(t),
\theta_i(t)=\frac{2\pi i}{N}+\phi_i(t), 
\end{eqnarray}
where $|\xi_i|,|\phi_i|\ll1$. At linear order, writing $\Delta_{ji}=2\pi (j-i)/N$, we obtain
\begin{align}
\dot{\xi}_i(t)
&=
\frac{J}{N}\sum_{j=1}^N
\Big[
\cos^2\!\Delta_{ji}\,(\xi_j(t)-\xi_i(t)) \nonumber \\
&\quad -
\sin^2\!\Delta_{ji}\,(\phi_j(t-\tau)-\phi_i(t))
\Big], \\
\dot{\phi}_i(t)
&=
\frac{K}{N}\sum_{j=1}^N
\Big[
\cos^2\!\Delta_{ji}\,(\phi_j(t)-\phi_i(t)) \nonumber \\
&\quad -
\sin^2\!\Delta_{ji}\,(\xi_j(t-\tau)-\xi_i(t))
\Big].
\end{align}
Using the discrete Fourier ansatz
\begin{eqnarray}
\xi_i(t)=\hat{\xi}_k e^{\lambda t} e^{2\pi \iota k i/N},
\;
\phi_i(t)=\hat{\phi}_k e^{\lambda t} e^{2\pi \iota k i/N},
\end{eqnarray}
together with
\begin{align}
S_k^c&=\frac1N\sum_{m=0}^{N-1}\cos^2\!\Big(\frac{2\pi m}{N}\Big)e^{2\pi \iota km/N},
\\
S_k^s&=\frac1N\sum_{m=0}^{N-1}\sin^2\!\Big(\frac{2\pi m}{N}\Big)e^{2\pi \iota km/N},
\end{align}
the modal equations become
\begin{align}
\lambda \hat{\xi}_k
&=
J\Big(S_k^c-\frac12\Big)\hat{\xi}_k
-
J\Big(e^{-\lambda\tau}S_k^s-\frac12\Big)\hat{\phi}_k, \\
\lambda \hat{\phi}_k
&=
K\Big(S_k^c-\frac12\Big)\hat{\phi}_k
-
K\Big(e^{-\lambda\tau}S_k^s-\frac12\Big)\hat{\xi}_k .
\end{align}
As in the main text, only the modes $k=0,2,N-2$ are exceptional.

For the  modes $k\neq 0,2,N-2$, one has $S_k^c=S_k^s=0$, hence
\begin{eqnarray}
\lambda
\begin{pmatrix}
\hat{\xi}_k \\ \hat{\phi}_k
\end{pmatrix}
=
\frac12
\begin{pmatrix}
-J & J \\
K & -K
\end{pmatrix}
\begin{pmatrix}
\hat{\xi}_k \\ \hat{\phi}_k
\end{pmatrix},
\end{eqnarray}
with eigenvalues
\begin{eqnarray}
\lambda_1=0, \qquad \lambda_2=-\frac{J+K}{2}.
\end{eqnarray}
Therefore these modes are stable iff
\begin{eqnarray}
J+K>0.
\end{eqnarray}

For the  mode $k=0$, since $S_0^c=S_0^s=1/2$, the characteristic equation is
\begin{eqnarray}
\lambda^2-\frac{JK}{4}\bigl(1-e^{-\lambda\tau}\bigr)^2=0.
\end{eqnarray}
Thus, if $JK<0$, purely imaginary roots $\lambda= \iota \omega$ satisfy
\begin{eqnarray}
\omega_c=\sqrt{-JK},
\;
\tau_n=\frac{(2n+1)\pi}{\sqrt{-JK}},
\; n\in\mathbb{N},
\end{eqnarray}
whereas for $JK>0$ the zero root becomes multiple when
\begin{eqnarray}
JK\,\tau^2=4.
\end{eqnarray}

For the modes $k=2$ and $k=N-2$, one has
\begin{eqnarray}
S_2^c=\frac14,\qquad S_2^s=-\frac14,
\end{eqnarray}
so that the characteristic equation becomes
\begin{eqnarray}
\Bigl(\lambda+\frac{J}{4}\Bigr)\Bigl(\lambda+\frac{K}{4}\Bigr)
-
JK\Bigl(\frac12+\frac14 e^{-\lambda\tau}\Bigr)^2=0.
\end{eqnarray}
For $\tau=0$, this reduces to
\begin{eqnarray}
\lambda^2+\frac{J+K}{4}\lambda-\frac{JK}{2}=0,
\end{eqnarray}
hence the nondelayed condition
\begin{eqnarray}
J+K>0,\qquad JK<0.
\end{eqnarray}

For $\tau>0$, setting $\lambda=\iota \omega$ shows that a Hopf crossing is possible only if $JK<0$. If $u_c=\omega_c^2>0$ is a positive root of
\begin{multline}
256u^3+32(J^2+8JK+K^2)u^2
+ \\ \bigl(J^4+16J^3K+66J^2K^2+16JK^3+K^4\bigr)u
+2J^3K^3=0,
\end{multline}
then the corresponding critical delays are
\begin{eqnarray}
\tau_{c,n}^{(2)}=
\frac{\Arg(c_c+ \iota  s_c)+2\pi n}{\omega_c},
\qquad n\in\mathbb Z,
\end{eqnarray}
where
\begin{eqnarray}
c_c=-1+\sqrt{\frac{-8u_c}{JK}},
\;
s_c=-\frac{2(J+K)\omega_c}{JK(2+c_c)}.
\end{eqnarray}
The same expression holds for the mode $k=N-2$.

\begin{figure*}[ht!]
    \centering
    \begin{tabular}{ccc}
        \begin{overpic}[width=0.3\textwidth]{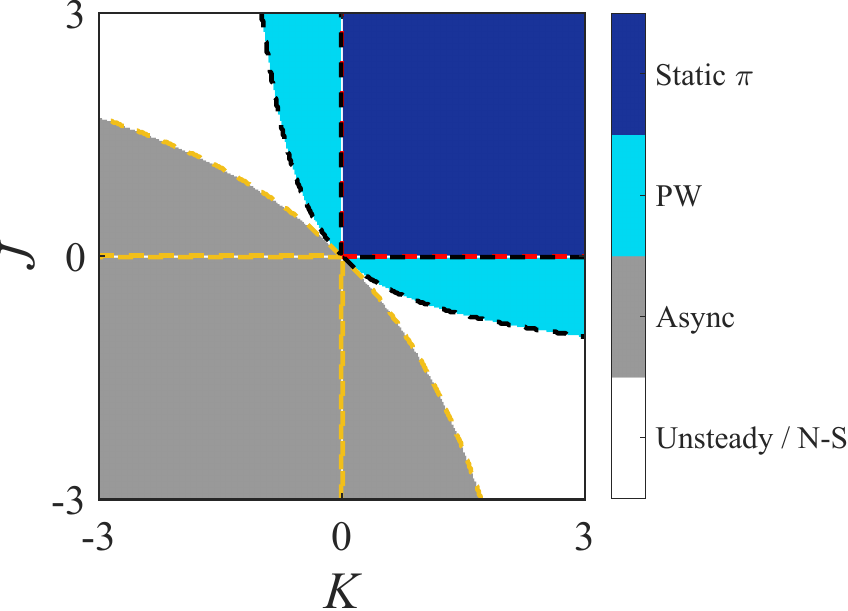}
            \put(4,75){\bfseries (a)}
        \end{overpic}
        &
        \begin{overpic}[width=0.3\textwidth]{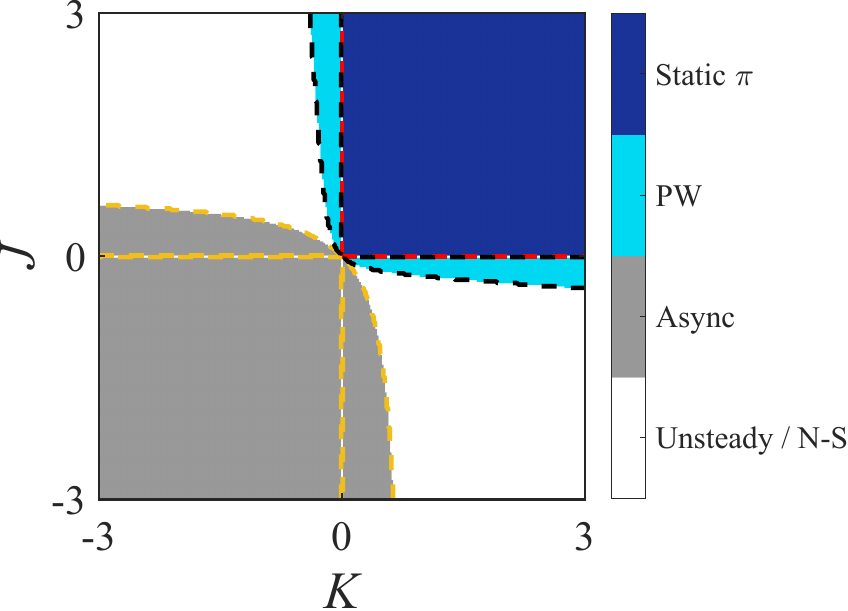}
            \put(4,75){\bfseries (b)}
        \end{overpic}
        &
        \begin{overpic}[width=0.3\textwidth]{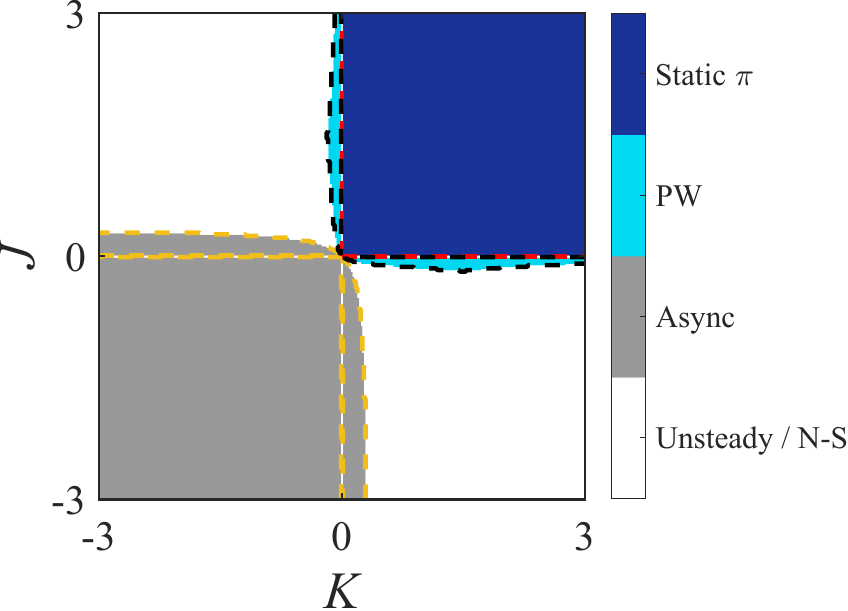}
            \put(4,75){\bfseries (c)}
        \end{overpic} \\
    \end{tabular}
\caption{\textbf{Bifurcation diagrams in the $(J,K)$ plane for the asymmetric delay model with delay in the cosine terms.}
The delay is fixed to: (a) $\tau=0.5$, (b) $\tau=2.5$, and (c) $\tau=6$.
The colored regions represent the stability domains of the main collective states obtained from the analytical stability conditions. 
Cyan corresponds to the phase wave state (PW), dark blue to the static $\pi$ state, gray to the async state (Async), and white to unsteady or non-stationary (N-S) dynamics. 
The dashed curves represent the analytically derived stability boundaries: the yellow dashed line indicates the stability boundary of the async state, the black dashed line corresponds to the stability boundary of the phase-wave state, and the red dashed line marks the stability boundary of the static $\pi$ state.}
    \label{fig:Asymmetric_J_K_cos}
\end{figure*}

\subsection{$\pi$ states}

Following the same approach as for the active $\pi$ state in the main text, we look for a two-cluster solution of the form
\begin{eqnarray}
x_i(t)=x_s+\chi_i+\omega_x t,\\
\theta_i(t)=\theta_s+\chi_i+\omega_\theta t,
\end{eqnarray}
where $\chi_i\in\{0,\pi\}$. Hence, for any pair $(i,j)$,
\begin{eqnarray}
x_j(t)-x_i(t)=\theta_j(t)-\theta_i(t)=\Delta_{ji},
\end{eqnarray}
with
\begin{eqnarray}
 \Delta_{ji}:=\chi_j-\chi_i\in\{0,\pm\pi\}, 
\end{eqnarray}
and
\begin{eqnarray}
\theta_j(t-\tau)-\theta_i(t)=\Delta_{ji}-\omega_\theta\tau, \\
x_j(t-\tau)-x_i(t)=\Delta_{ji}-\omega_x\tau.
\end{eqnarray}
Substituting into the equations of motion gives
\begin{align}
\dot{x}_i&=\frac{J}{N}\sum_{j=1}^N \sin(\Delta_{ji})\cos(\Delta_{ji}-\omega_\theta\tau), \\
\dot{\theta}_i&=\frac{K}{N}\sum_{j=1}^N \sin(\Delta_{ji})\cos(\Delta_{ji}-\omega_x\tau).
\end{align}
Since $\Delta_{ji}\in\{0,\pm\pi\}$, one has $\sin(\Delta_{ji})=0$ for all $(i,j)$, and therefore
\begin{eqnarray}
\dot{x}_i=0,\qquad \dot{\theta}_i=0.
\end{eqnarray}
Then 
\begin{eqnarray}
\omega_x=0,\qquad \omega_\theta=0.
\end{eqnarray}
Thus, the standard active $\pi$ does not appear ; it necessarily reduces to a static $\pi$ state.

We therefore consider the static $\pi$ equilibrium
\begin{eqnarray}
x_i^*=x_s+\chi_i,\; \theta_i^*=\theta_s+\chi_i,
\; \chi_i\in\{0,\pi\},
\end{eqnarray}
and perturb it as
\begin{eqnarray}
x_i(t)=x_i^*+\xi_i(t),
\theta_i(t)=\theta_i^*+\eta_i(t),
\end{eqnarray}
where $ |\xi_i|,|\eta_i|\ll1$. At linear order, using $\sin\Delta_{ji}=0$, $\cos^2\Delta_{ji}=1$, and $\sin^2\Delta_{ji}=0$, one obtains
\begin{eqnarray}
\dot{\xi}_i=\frac{J}{N}\sum_{j=1}^N(\xi_j-\xi_i),
\;
\dot{\eta}_i=\frac{K}{N}\sum_{j=1}^N(\eta_j-\eta_i).
\end{eqnarray}
Writing
\begin{eqnarray}
\bar{\xi}=\frac{1}{N}\sum_{j=1}^N \xi_j,\qquad
\bar{\eta}=\frac{1}{N}\sum_{j=1}^N \eta_j,
\end{eqnarray}
this becomes
\begin{eqnarray}
\dot{\xi}_i=J(\bar{\xi}-\xi_i),\qquad
\dot{\eta}_i=K(\bar{\eta}-\eta_i).
\end{eqnarray}
The uniform modes correspond to the neutral eigenvalue $\lambda=0$, associated with global translations in $x$ and $\theta$. For all nonuniform modes, one finds
\begin{eqnarray}
\lambda_x=-J,\qquad \lambda_\theta=-K.
\end{eqnarray}
Hence the static $\pi$ state is linearly stable iff
\begin{eqnarray}
J>0,\qquad K>0.
\end{eqnarray}

\begin{remark}
We emphasize that the mathematical analysis framework developed in the main body of this work may also be applied to other cases involving asymmetric delays.
\end{remark}

We then visualized the stability regions corresponding to the different observed dynamical regimes, namely the phase wave in cyan, the async state in gray, the static $\pi$ state in blue, and the unsteady, or non-static, state in white, as shown in Fig.~\ref{fig:Asymmetric_J_K_cos}. We can see that increasing the delay affects these stability regions significantly. Figure~\ref{fig:Asymmetric_J_K_cos}(a) corresponds to $\tau = 0.5$, Fig.~\ref{fig:Asymmetric_J_K_cos}(b) corresponds to $\tau = 2.5$, and Fig.~\ref{fig:Asymmetric_J_K_cos}(c) corresponds to $\tau = 6$. As $\tau$ increases, the stability region of the phase wave decreases, as does that of the async state, while the stability region of the unsteady, or non-static, state increases.

\section{Symmetric delay}
\label{appB}
The symmetric delay model has already been studied in a recent work \cite{o2026time}. The purpose of presenting these results (see Fig. \ref{fig:symetric_J_K_plot}) here is therefore to compare them with those obtained for the asymmetric model studied in the main text.

\vspace{3cm}
\begin{figure*}[ht!]
    \centering
    \begin{tabular}{ccc}
        \begin{overpic}[width=0.3\textwidth]{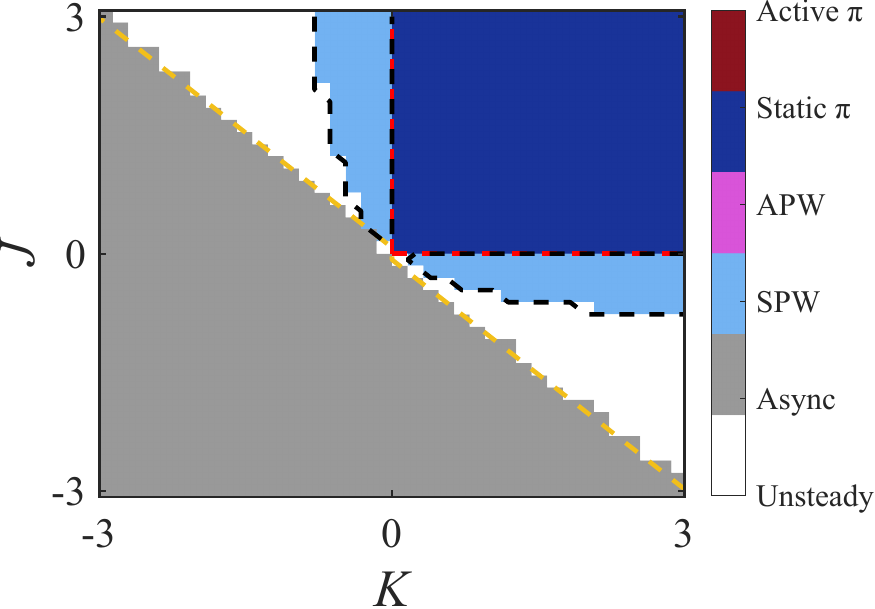}
            \put(4,71){\bfseries (a)}
        \end{overpic}
        &
        \begin{overpic}[width=0.3\textwidth]{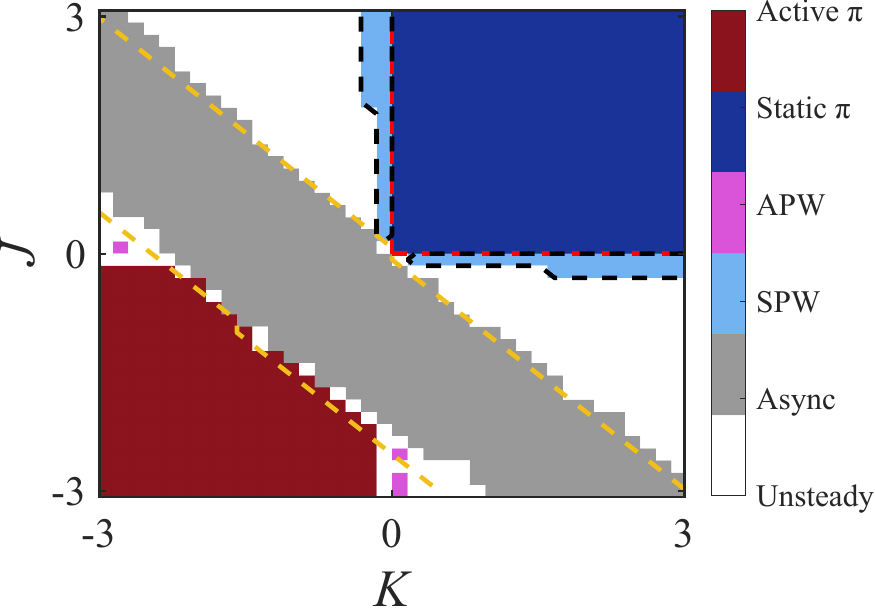}
            \put(4,71){\bfseries (b)}
        \end{overpic}
        &
        \begin{overpic}[width=0.3\textwidth]{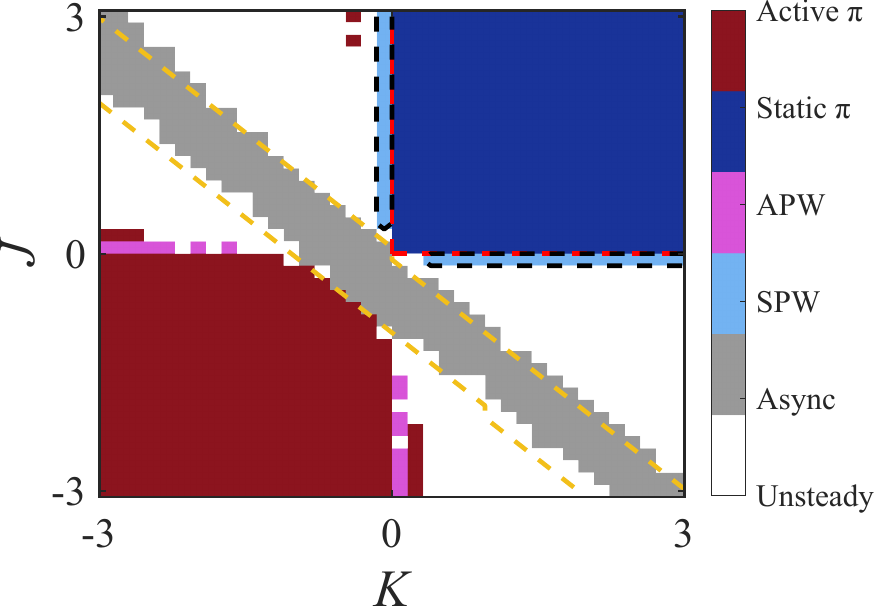}
            \put(4,71){\bfseries (c)}
        \end{overpic}
    \end{tabular}

 \caption{\textbf{Bifurcation diagrams in the $(J,K)$ plane for the symmetric delay model.}
The delay is fixed to: (a) $\tau=0.5$, (b) $\tau=2.5$, and (c) $\tau=6$.
For the set of initial conditions considered here, with both $x$ and $\theta$
uniformly distributed in $[-\pi,\pi]$.  The threshold
used to distinguish active states from static states is set to $0.1$ for the
mean velocity. The colors indicate the different collective states: dark red corresponds to the
active $\pi$ state, dark blue to the static $\pi$ state, magenta to the active
phase wave (APW), light blue to the static phase wave (SPW), gray to the
async state (Async), and white to unsteady dynamics. The dashed curves
represent the analytically derived stability boundaries: the yellow dashed line
indicates the stability boundary of the async state, the black dashed line
corresponds to the stability boundary of the phase-wave state, and the red dashed
line marks the stability boundary of the static $\pi$ state.}
    \label{fig:symetric_J_K_plot}
\end{figure*}
For the analysis of the active $\pi$ state in this model, the existence of a solution requires that
\begin{subequations}
\begin{align}
\omega_1 &= -J\sin(\omega_1\tau)\cos(\omega_2\tau), \label{eq:existence_sym1}\\
\omega_2 &= -K\sin(\omega_2\tau)\cos(\omega_1\tau). \label{eq:existence_sym2}
\end{align}
\end{subequations}

Let
\begin{eqnarray}
C=\cos(\omega_1\tau)\cos(\omega_2\tau),
\;
S=\sin(\omega_1\tau)\sin(\omega_2\tau).
\end{eqnarray}

For the zero mode $m=0$, the characteristic equation is

\begin{equation}
\left[\lambda - JC(x - 1)\right]
\left[\lambda - KC(x - 1)\right]
- JK S^2 (x - 1)^2 = 0.
\end{equation}
\begin{equation}
\text{Where :} \qquad x = e^{-\lambda \tau}.
\end{equation}

For the nonzero modes $m \neq 0$, the characteristic equation is
\begin{eqnarray}
\lambda^2 + (J+K)C\,\lambda + JK(C^2-S^2) = 0.
\end{eqnarray}

Thus, the stability conditions for $m \neq 0$ are
\begin{eqnarray}
(J+K)C > 0,
\qquad
JK(C^2-S^2) > 0.
\end{eqnarray}

The stability of the $\pi$ state in the $(J,K)$ parameter plane is illustrated in
Fig.~\ref{fig:pi_state_region_stability_sym}. The light beige regions correspond to
parameter values for which the $\pi$ state is stable, whereas the orange regions
indicate instability. Panels~\ref{fig:pi_state_region_stability_sym}(a),
\ref{fig:pi_state_region_stability_sym}(b), and
\ref{fig:pi_state_region_stability_sym}(c) correspond respectively to
$\tau=0.5$, $\tau=2.5$, and $\tau=6$.

These figures are obtained by numerically solving the existence equations
\eqref{eq:existence_sym1} and \eqref{eq:existence_sym2} for each pair $(J,K)$,
while exploring the different solution branches. For each solution found,
stability is then determined through the spectral analysis of both the zero mode
$m=0$ and the nonzero modes $m \neq 0$.

From these results, one can observe that the $\pi$ state is predominantly stable
in the quadrants where $J$ and $K$ have the same sign for the symmetric delay.

\begin{figure*}[ht!]
    \centering
    \begin{tabular}{ccccccccc}
        \begin{overpic}[width=0.25\linewidth]{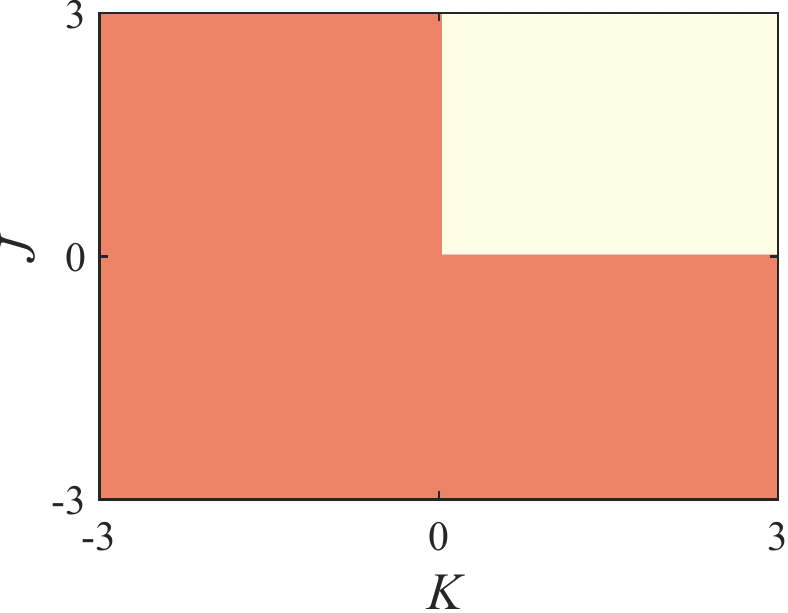}
            \put(4,80){\bfseries (a)}
        \end{overpic}
        &&&&
        \begin{overpic}[width=0.25\linewidth]{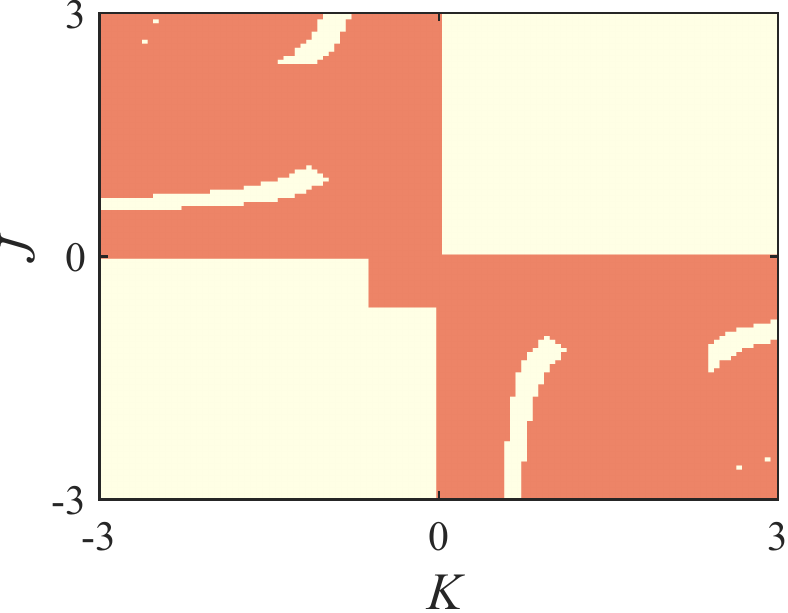}
            \put(4,80){\bfseries (b)}
        \end{overpic}
        &&&&
        \begin{overpic}[width=0.25\linewidth]{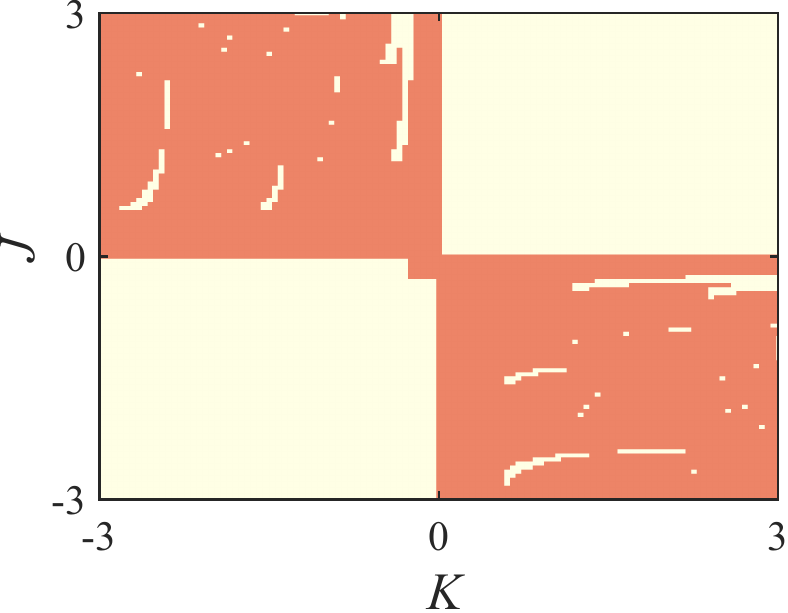}
            \put(4,80){\bfseries (c)}
        \end{overpic}
    \end{tabular}
\caption{{\bf Stability regions of the $\pi$ state in the $(J,K)$ plane for the symmetric delay model.} The light beige regions correspond to parameter values for which at least one stable pair of frequency branches $(\omega_x,\omega_\theta)$ exists, whereas the orange regions indicate instability. Panels (a), (b), and (c) correspond respectively to $\tau=0.5$, $\tau=2.5$, and $\tau=6$.}
    \label{fig:pi_state_region_stability_sym}
\end{figure*}







\bibliographystyle{apsrev}
\bibliography{sample}

\end{document}